\documentclass[a4paper]{article}

\usepackage{a4wide,graphicx}
\usepackage{amsmath,amssymb,amsthm}
\usepackage{mathrsfs,stmaryrd,bbm}

\newcommand{\abs}[1]{\vert #1\vert}
\newcommand{\ave}[1]{\langle #1\rangle}
\newcommand{\dual}[2]{\langle #1,\,#2\rangle}
\newcommand{\innprod}[2]{(#1,\,#2)}
\newcommand{\jump}[1]{\llbracket #1\rrbracket}
\newcommand{\intl}{\int\limits}
\newcommand{\Lip}[1]{\textnormal{Lip}(#1)}
\newcommand{\LipPhi}[1]{\textnormal{Lip}_{\Phi}(#1)}
\newcommand{\norm}[1]{\|#1\|}
\newcommand{\winnprod}[2]{(\!(#1,\,#2)\!)}
\newcommand{\wnorm}[1]{|\!|\!|#1|\!|\!|}

\newcommand{\1}{\mathbbm{1}}
\newcommand{\A}{\mathcal{A}}
\newcommand{\B}{\mathcal{B}}
\newcommand{\cth}{c_\ast}
\newcommand{\D}{\mathbf{D}}
\newcommand{\Heav}{H}
\newcommand{\I}{\mathscr{I}}
\newcommand{\K}{\mathbf{K}}
\newcommand{\Lag}{L}
\newcommand{\m}{\mathbf{m}}
\newcommand{\n}{\mathbf{n}}
\renewcommand{\P}{\mathscr{P}}
\newcommand{\QT}{\mathcal{Q}_{\Tmax}}
\newcommand{\R}{\mathbb{R}}
\renewcommand{\S}{\mathscr{S}}
\newcommand{\T}{\mathbf{T}}
\newcommand{\Tmax}{T_\textup{max}}
\newcommand{\UU}{\mathcal{U}}
\renewcommand{\v}{\mathbf{v}}
\newcommand{\V}{\mathbb{V}}
\newcommand{\VT}{\mathbb{V}_{\Tmax}}
\newcommand{\VV}{\mathcal{V}}

\newcommand{\phimax}{\phi_\textup{max}}
\newcommand{\phith}{\phi_\ast}

\newtheorem{theorem}{Theorem}
\theoremstyle{definition}\newtheorem{definition}{Definition}
\theoremstyle{remark}\newtheorem{remark}{Remark}

\graphicspath{{./}{figures/}}

\title{Initial/boundary-value problems of tumor growth within a host tissue}
\author{Andrea Tosin\thanks{A. Tosin was funded by a post-doctoral research scholarship ``Compagnia di San Paolo''
							awarded by the National Institute for Advanced Mathematics ``F. Severi'' (INdAM, Italy).} \\[5mm]
		{\small\it Department of Mathematics, Politecnico di Torino}\\[-1mm]
		{\small\it Corso Duca degli Abruzzi 24, 10129, Torino, Italy}
	   }
\date{}

\begin{document}

\maketitle

\begin{abstract}
This paper concerns multiphase models of tumor growth in interaction with a surrounding tissue, taking into account also the interplay with diffusible nutrients feeding the cells. Models specialize in nonlinear systems of possibly degenerate parabolic equations, which include phenomenological terms related to specific cell functions. The paper discusses general modeling guidelines for such terms, as well as for initial and boundary conditions, aiming at both biological consistency and mathematical robustness of the resulting problems. Particularly, it addresses some qualitative properties such as \emph{a priori} nonnegativity, boundedness, and uniqueness of the solutions. Existence of the solutions is studied in the one-dimensional time-independent case.

\medskip

\noindent{\bf Keywords:} multiphase models, nonlinear (degenerate) diffusion, \emph{a priori} estimates

\medskip

\noindent{\bf Mathematics Subject Classification:} 35B45, 35Q92, 92B05
\end{abstract}

\section{Mixture-theory equations for tumor growth}
\subsection{Mixture-theory-based models}
The interest toward mathematical modeling of tumor growth rose considerably in the last decades, to such an extent that it has now become one of the most studied topics in mathematical biology. Early mathematical models \cite{MR2005055,byrne1995gnt,byrne1996gnt} considered tumors as ensembles of only one type of cells. Growth was described under the main assumption of constant cell density, by relating the volume variation of the tumor mass to birth and death of cells triggered by nutrient supply. In most cases simple \emph{in vitro} geometries were considered, such as spheroids, and qualitative analyses of the resulting free boundary problems were detailed \cite{MR2150346,bueno2008ssm,chen2003fbp,MR1815805,friedman2009fbp,friedman2007bfb,MR1684873}.

However, the biological literature pointed out soon that tumors should be regarded more properly as ensembles of different interacting components, e.g., normal and abnormal cells, intercellular fluid, extracellular matrix. This aspect is taken into account by modeling tumors as  multiphase materials by methods of mixture theory, see for instance \cite{byrne2003mst}. In mixture theory one introduces a few volume ratios $\phi_\alpha$, where the index $\alpha$ labels the components of the mixture, expressing the percent amount of the constituents. Each volume ratio is supposed to satisfy $0\leq\phi_\alpha\leq 1$, with a possible further condition $\sum_\alpha\phi_\alpha=1$, called \emph{saturation constraint}, if one assumes that no voids are left within the mixture. Mass balance equations are written for the constituents under the assumption of same density:
\begin{equation}
	\frac{\partial\phi_\alpha}{\partial t}+\nabla\cdot(\phi_\alpha\v_\alpha)=\Gamma_\alpha,
	\label{eq:mixture-mass}
\end{equation}
where $\v_\alpha$ and $\Gamma_\alpha$ are the velocity and the source/sink term of the constituent $\alpha$, respectively.

A first class of multiphase models is obtained from Eq.~\ref{eq:mixture-mass} via suitable closure relations relating the velocities to the volume ratios of the constituents \cite{MR1909425}. A very common assumption is that all cell populations share the same velocity, while non-cellular components have their own. Then geometrical considerations, for instance some symmetries, may determine kinematically the velocities of some constituents, as it happens in \cite{MR2111919} where the cylindrical symmetry of tumor cords developing along a blood vessel is used to deduce the velocity of the cells. The model presented in that paper combines ideas coming from mixture theory with free boundary issues. Particularly, it includes dynamical constraints on the nutrient distribution across the tumor mass for modeling the formation of an outer necrotic shell of dead cells at the periphery of the tumor cord.

A second class of multiphase models is obtained by joining to Eq.~\ref{eq:mixture-mass} some stress balance equations for the constituents of the mixture, in which inertial effects are neglected:
\begin{equation}
	-\nabla\cdot(\phi_\alpha\T_\alpha)+\phi_\alpha\nabla{p}=\m_\alpha.
	\label{eq:mixture-stress}
\end{equation}
Here $\T_\alpha$, $\m_\alpha$ are the excess stress tensor and the resultant of the external actions on the constituent $\alpha$, respectively, whereas $p$ is the intercellular fluid pressure. Equations~\ref{eq:mixture-stress} are used to derive the velocities $\v_\alpha$ from mechanical reasonings on the internal and external stress sustained by the constituents. For instance, assuming that the external actions $\m_\alpha$ can be expressed as viscous frictions, thus involving only the relative velocity of pairs of constituents, one gets (possibly generalized) Darcy's laws that can be plugged into Eq.~\ref{eq:mixture-mass}. Considering in particular a sub-mixture of two cell populations, namely tumor cells labeled with $\alpha=T$ and healthy host cells labeled with $\alpha=H$, which share the same mechanical properties, and assuming that the extracellular matrix only acts as a rigid non-remodeling scaffold providing them with a support for their movement, the following equations are obtained \cite{MR2471305}:
\begin{equation}
	\frac{\partial\phi_\alpha}{\partial t}-\nabla\cdot{\left[\frac{\phi_\alpha^2}{\phi}
		\K_{\alpha m}\nabla{(\phi\Sigma(\phi))}\right]}=\Gamma_\alpha,
	\label{eq:mixture-general}
\end{equation}
where $\phi:=\phi_T+\phi_H$ is the overall volume ratio of the cellular matter, $\Sigma=\Sigma(\phi)$ is the intercellular stress (such that $\T_T=\T_H=-\Sigma(\phi)\mathbf{I}$), and finally $\K_{\alpha m}$ is the motility tensor of the cell population $\alpha$ within the extracellular matrix.

The source/sink terms $\Gamma_\alpha$ model proliferation or death of cells, taking into account both natural processes linked to the vital cell cycle and the availability of nutrient. Hence they depend on the cell volume ratios $\phi_T,\,\phi_H$ and on the nutrient concentration $c$, which entails $\Gamma_\alpha=\Gamma_\alpha(\phi_T,\,\phi_H,\,c)$. This introduces the new variable $c$, for which an evolution equation, usually of reaction-diffusion type, is supplied:
\begin{equation}
	\frac{\partial c}{\partial t}-\nabla\cdot{(\D\nabla{c})}=\sum_{\alpha=T,\,H}Q_\alpha(\phi_\alpha,\,c),
	\label{eq:mixture-c}
\end{equation}
where $\D$ is the diffusivity tensor and $Q_\alpha$ models the absorption of nutrient by the cells of the population $\alpha$. The tensor $\D$ may be taken independent of $c$, and also of the cell volume ratios $\phi_\alpha$, because nutrient molecules are not regarded as a part of the mixture, i.e., they are assumed to diffuse through the mixture without occupying space. Equation~\ref{eq:mixture-c} features then a linear diffusion. In addition, the functions $Q_\alpha$ are sometimes linear in $c$, hence Eq.~\ref{eq:mixture-c} turns out to be, in most cases, a linear model for the nutrient concentration. However, the simultaneous dependence of the functions $\Gamma_\alpha$, $Q_\alpha$ on both $\phi_\alpha$, $c$ makes Eqs.~\ref{eq:mixture-general}, \ref{eq:mixture-c} ultimately coupled.

Technical simplifications in Eqs.~\ref{eq:mixture-general}, \ref{eq:mixture-c} may involve the assumption of homogeneous isotropic motility of the cells in the extracellular matrix, as well as homogeneous isotropic diffusion of the nutrient through the mixture. These imply $\K_{\alpha m}=\kappa_{\alpha m}\mathbf{I}$ in Eq.~\ref{eq:mixture-general} and $\D=D\mathbf{I}$ in Eq.~\ref{eq:mixture-c}, for positive constants $\kappa_{\alpha m},\,D$. In addition, invoking the hypothesis of same mechanical properties for tumor and host cells, one may set $\kappa_{Tm}=\kappa_{Hm}=:\kappa_m>0$, which, as pointed out in \cite{MR2471305}, is a good approximation at least in the initial stages of the development of a tumor, when contact inhibition among cells is more influential than differences in their motility.

\subsection{Cell segregation}

\begin{figure*}
\centering
\includegraphics[width=0.5\textwidth]{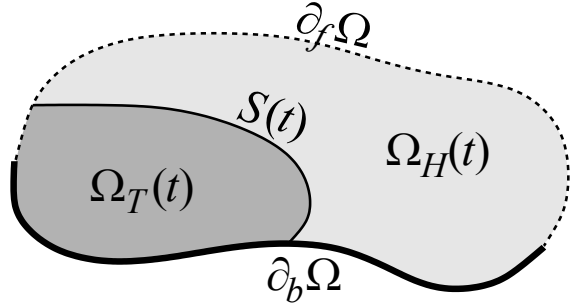}
\caption{The domain $\Omega$ in case of cell segregation. At each time, tumor cells are contained in $\Omega_T(t)$ and healthy host cells in $\Omega_H(t)$. The two cell populations are separated by the interface $S(t)$, which is a $(d-1)$-dimensional manifold in $\Omega$.}
\label{fig:domain}
\end{figure*}

An interesting case is when tumor and host cells remain segregated, i.e., the spatial domain $\Omega\subset\R^d$ of the problem can be split at each time in two (open) sub-domains $\Omega_T(t)$, $\Omega_H(t)$ such that $\overline{\Omega_T(t)\cup\Omega_H(t)}=\overline{\Omega}$ and $\Omega_T(t)\cap\Omega_H(t)=\emptyset$ (Fig. \ref{fig:domain}). In practice, each sub-domain contains a mixture of extracellular fluid, extracellular matrix, and just one type of cells obeying the following balance equation:
\begin{equation}
	\frac{\partial\phi_\alpha}{\partial t}-\kappa_m\nabla\cdot{\left[\phi_\alpha
		\nabla{(\phi_\alpha\Sigma(\phi_\alpha))}\right]}=\Gamma_\alpha(\phi_\alpha,\,c),
	\label{eq:mixture}
\end{equation}
each volume ratio $\phi_\alpha$ being defined only in the corresponding domain $\Omega_\alpha$. Equation~\ref{eq:mixture} is derived from Eq.~\ref{eq:mixture-general} noticing that $\phi\equiv\phi_\alpha$ in $\Omega_\alpha$ owing to segregation. The two mixtures interact at the interface $S(t):=\partial\Omega_T(t)\cap\partial\Omega_H(t)$ separating the sub-domains, hence each $\phi_\alpha$ solves in principle a free boundary problem because $S(t)$ is not fixed. However, it is possible to supplement Eq.~\ref{eq:mixture} with proper conditions on $S(t)$ so as to reformulate it globally in $\Omega$. Proceeding in a formal fashion, we integrate Eq.~\ref{eq:mixture} on $\Omega_\alpha(t)$ up to a certain final time $\Tmax>0$, then we apply Gauss' Theorem to the divergence term at the left-hand side and sum over $\alpha$ to discover:
\begin{align*}
	\sum_{\alpha=T,\,H}&\left(\intl_0^{\Tmax}
		\intl_{\Omega_\alpha(t)}\frac{\partial\phi_\alpha}{\partial t}\,dx\,dt-
		\kappa_m\intl_0^{\Tmax}\intl_{\partial\Omega_\alpha(t)\setminus S(t)}\phi_\alpha
			\nabla{(\phi_\alpha\Sigma(\phi_\alpha))}\cdot\n\,d\sigma\,dt\right) \\
	&+\kappa_m\intl_0^{\Tmax}\intl_{S(t)}\left[\phi_T\nabla{(\phi_T\Sigma(\phi_T))}-
		\phi_H\nabla{(\phi_H\Sigma(\phi_H))}\right]\cdot\n\,d\sigma\,dt \\
	&=\sum_{\alpha=T,\,H}
		\intl_0^{\Tmax}\intl_{\Omega_\alpha(t)}\Gamma_\alpha(\phi_\alpha,\,c)\,dx\,dt,
\end{align*}
where $d\sigma$ is the $(d-1)$-dimensional Hausdorff measure in $\R^d$ and $\n$ is the outward normal unit vector to the boundary on which integration is performed. In particular, along $S(t)$ it denotes the outward normal unit vector to $\Omega_H(t)$, so that the analogous vector for $\Omega_T(t)$ is $-\n$ (but, of course, the opposite convention may also be adopted).

Next we reintroduce the function $\phi:[0,\,\Tmax]\times\Omega\to\R$:
\begin{equation*}
	\phi(t,\,x)=
	\begin{cases}
		\phi_T(t,\,x) & \text{if\ } x\in\Omega_T(t) \\
		\phi_H(t,\,x) & \text{if\ } x\in\Omega_H(t),
	\end{cases}
	\quad t\in[0,\,\Tmax],
\end{equation*}
and use $\cup_\alpha\Omega_\alpha(t)=\Omega$, $\cup_\alpha\partial\Omega_\alpha(t)\setminus S(t)=\partial\Omega$ to obtain
\begin{multline}
	\intl_0^{\Tmax}\intl_\Omega\frac{\partial\phi}{\partial t}\,dx\,dt-
	\kappa_m\intl_0^{\Tmax}\intl_{\partial\Omega}\phi\nabla{(\phi\Sigma(\phi))}\cdot\n\,d\sigma\,dt \\
	+\kappa_m\intl_0^{\Tmax}\intl_{S(t)}\jump{\phi\nabla{(\phi\Sigma(\phi))}}\cdot\n\,d\sigma\,dt
	= \intl_0^{\Tmax}\intl_\Omega\Gamma(t,\,x,\,\phi,\,c)\,dx\,dt,
	\label{eq:mixture-segr-int}
\end{multline}
where we have defined 
\begin{equation}
	\Gamma(t,\,x,\,\phi,\,c):=\sum_{\alpha=T,\,H}\Gamma_\alpha(\phi_\alpha,\,c)\1_{\Omega_\alpha(t)}(x).
	\label{eq:Gamma_on_Om}
\end{equation}
In formulas \ref{eq:mixture-segr-int}, \ref{eq:Gamma_on_Om}, $\jump{\cdot}$ denotes jump across $S(t)$ whereas $\1_{\Omega_\alpha(t)}$ is the indicator function of the set $\Omega_\alpha(t)$. If we reapply Gauss' Theorem to the second term at the left-hand side, we can regard Eq.~\ref{eq:mixture-segr-int} as the integral version of the differential equation
\begin{equation}
	\frac{\partial\phi}{\partial t}-\kappa_m\nabla\cdot{\left[\phi
		\nabla{(\phi\Sigma(\phi))}\right]}=\Gamma(t,\,x,\,\phi,\,c)
	\label{eq:system-phi}
\end{equation}
provided $\kappa_m\jump{\phi\nabla{(\phi\Sigma(\phi))}}\cdot\n=0$ on $S(t)$ at each time. With this condition, Eq.~\ref{eq:system-phi} is equivalent to Eq.~\ref{eq:mixture} on either sub-domain $\Omega_\alpha(t)$, being at the same time posed globally in $\Omega$. We will come back later (cf. Sect.~\ref{sect:interface-cond}) to the significance of such interface condition from the modeling viewpoint.

By comparing Eq.~\ref{eq:system-phi} with the standard mass balance equation of continuum mechanics we infer that the velocity $\v$ of the cellular matter is
\begin{equation}
	\v=-\kappa_m\nabla{(\phi\Sigma(\phi))}.
	\label{eq:vel.cell}
\end{equation}
Furthermore, by defining
\begin{equation}
	\Phi'(s):=s(s\Sigma(s))'
	\label{eq:Phiprime}
\end{equation}
we notice that Eq.~\ref{eq:system-phi} can be rewritten in the form of nonlinear diffusion:
\begin{equation}
	\frac{\partial\phi}{\partial t}-\kappa_m\Delta{\Phi(\phi)}=\Gamma(t,\,x,\,\phi,\,c).
	\label{eq:nonlin.diff}
\end{equation}

Conversely, Eq.~\ref{eq:mixture-c} is naturally defined on the whole $\Omega$: segregation is for cells, not for nutrient. Nevertheless, coherently with the segregation assumption, we redefine the right-hand side as
\begin{equation*}
	Q(t,\,x,\,\phi,\,c):=\sum_{\alpha=T,\,H}Q_\alpha(\phi,\,c)\1_{\Omega_\alpha(t)}(x)
\end{equation*}
and write
\begin{equation}
	\frac{\partial c}{\partial t}-D\Delta{c}=Q(t,\,x,\,\phi,\,c).
	\label{eq:system-c}
\end{equation}

\subsection{Aims and scope}
This paper is concerned with mathematical models of tumor growth of the kind outlined above, with a twofold goal. On the one hand, to discuss biologically consistent modeling lines for the phenomenological terms of the equations (namely, the functions $\Sigma$, $\Gamma$, and $Q$), as well as suitable boundary, interface, and initial conditions. On the other hand, to obtain qualitative results, such as \emph{a priori} nonnegativity, boundedness, and uniqueness of the solution, along with continuous dependence estimates, which support modeling with mathematical rigor. For this reason, the paper is ideally divided in two parts.

The first part, encompassing Sects.~\ref{sect:const.ass}, \ref{sect:conditions}, is especially devoted to modeling. Particularly, Sect.~\ref{sect:const.ass} surveys the most popular models proposed in the literature for $\Sigma$, $\Gamma$, and $Q$. Inspired by them, it fixes some modeling assumptions which will be used throughout the subsequent sections. Section~\ref{sect:conditions} discusses boundary, interface, and initial conditions needed to formulate mathematical problems, with special emphasis on the use of the former for simulating the surrounding environment e.g., a nearby vasculature.

The second part, encompassing Sects.~\ref{sect:notations}--\ref{sect:time.independent}, is targeted at analytical issues. Specifically, Sect.~\ref{sect:notations} is a preliminary technical one, introducing the main notations and recalling the essential theoretical background. Subsequently, Sects.~\ref{sect:time.dependent}, \ref{sect:time.independent} approach the time-dependent and time-independent problems, respectively, establishing \emph{a priori} estimates on their solutions. Existence of solutions is also explicitly addressed in the one-dimensional stationary case.

Finally, Sect.~\ref{sect:developments} sketches some research perspectives on recent multiphase models of tumor growth incorporating explicitly the attachment/detachment of cells to/from the extracellular matrix.

The paper is equipped with two Appendices, which contribute to make it as self-contained as possible. Appendix \ref{app:poisson} concerns the handling of the nonlinearities in the equations of the models. Appendix \ref{app:no.far} further extends the theory to other kinds of boundary conditions, partly different from those discussed in Sect.~\ref{sect:conditions}, relevant for applications.

\section{Constitutive assumptions}
\label{sect:const.ass}
\subsection{The cell stress function}
The function $\Sigma$ expresses the internal stress to each cell population. As already mentioned, the Cauchy excess stress tensors are given by $\T_T=\T_H=-\Sigma(\phi)\mathbf{I}$, hence $\Sigma$ acts as an intercellular pressure depending on the local cell packing.

For theoretic purposes, in the sequel it will be more customary to deal with the so-called \emph{constitutive function} $\Phi$, defined by Eq.~\ref{eq:Phiprime}, rather than with $\Sigma$ itself, although it is obviously possible to switch at any time to either function via the above-mentioned relationship.

Diffusion problems are well-known to be ill-posed if the diffusion coefficient is negative, therefore a very basic requirement in our case is that $\Phi'$ be nonnegative. A more complete characterization is provided by the following assumption:

\begin{enumerate}
\item[\bf (H1)] $\Phi:\R\to\R$ is smooth, strictly increasing, and normalized in such a way that $\Phi(0)=0$.
\end{enumerate}

Strict monotonicity of the constitutive function is classically required in the theory of nonlinear parabolic equations \cite{vazques2007pme-book}. Notice that if $\Phi$ is strictly increasing then $\Phi'$ cannot vanish but possibly at the origin (indeed $\Phi'(0)=0$ is forced by Eq.~\ref{eq:Phiprime} if ${(s\Sigma(s))}'$ is not infinite in $s=0$), therefore $\Phi'(s)>0$ for all $s\ne 0$ and $\Phi$ is invertible. We anticipate that we will use invertibility in Sect.~\ref{sect:time.independent} for the existence theory of the solutions to the stationary problem.

Many models in the literature assume that $\Sigma$ grows steeply as cells get highly packed. For example, in \cite{MR2257726} a particular instance of the following function is proposed (Fig. \ref{fig:Sigma1}, left):
\begin{equation}
	\Sigma(s)=as+b{\left[{(s-\phith)}^+\right]}^n,
	\label{eq:Sigma-tosambprez}
\end{equation}
where $a,\,b$ are positive constants with $b\gg a$, $n\geq 1$ is integer, $\phith\in(0,\,1)$ is the close-packing cell volume ratio, and $(\cdot)^+$ denotes the positive part of its argument. In practice, such a $\Sigma$ is a physiological pressure for normally packed cells, which rapidly increases as soon as the tissue becomes overly dense. Function \ref{eq:Sigma-tosambprez} is smooth for $n>1$ and piecewise smooth for $n=1$, and the corresponding constitutive function $\Phi$:
\begin{equation*}
	\Phi(s)=\frac{2}{3}as^3+b{\left[{(s-\phith)}^+\right]}^n\left\{s^2
		-\frac{s{(s-\phith)}^+}{n+1}+\frac{{[{(s-\phith)}^+]}^2}{(n+1)(n+2)}\right\}
\end{equation*}
fulfills hypothesis (H1).

Other authors use \cite{manoussaki2003mma,murray2003mtb} (Fig. \ref{fig:Sigma1}, center)
\begin{equation}
	\Sigma(s)=\frac{\tau s}{1+\lambda s^2},
	\label{eq:Sigma-manmurr}
\end{equation}
$\lambda,\,\tau>0$, which grows again linearly for small $s$ (physiological pressure) but then reproduces a release of the stress after the maximum $\tau/(2\sqrt{\lambda})$ attained for $s=1/\sqrt{\lambda}$. This should model a saturation effect due to that at high densities all cells are not able to push. It is interesting to note that such a behavior is opposite to the one assumed by Eq.~\ref{eq:Sigma-tosambprez}, nevertheless from Eq.~\ref{eq:Phiprime} it can be easily computed
\begin{equation*}
	\Phi(s)=\frac{\tau}{\lambda}\left(\frac{\arctan{(\sqrt{\lambda}s)}}{\sqrt{\lambda}}-
		\frac{s}{1+\lambda s^2}\right),
\end{equation*}
which complies in turn with hypothesis (H1).

\begin{figure*}
\centering
\includegraphics[width=\textwidth,clip]{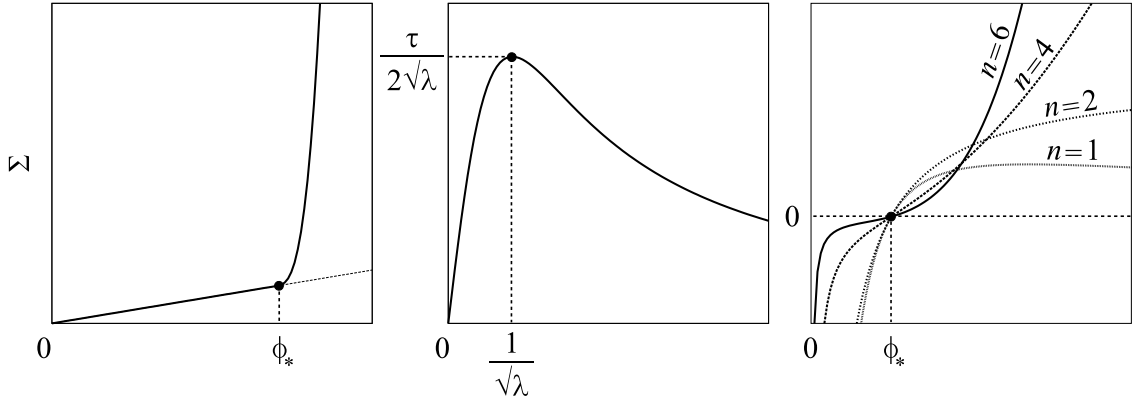} \quad
\caption{From left to right, the intercellular stress functions \ref{eq:Sigma-tosambprez}, \ref{eq:Sigma-manmurr}, \ref{eq:Sigma-tos}.}
\label{fig:Sigma1}
\end{figure*}

In the previous two examples it results $\Sigma(s)\geq 0$ for all $s\geq 0$, but this is not strictly necessary for hypothesis (H1) to be satisfied: even if $\Sigma(s)<0$ for some $s\geq 0$ one might get $\Phi'(s)>0$ for all $s\ne 0$. As recalled in \cite{MR1909425,byrne2003mst}, negative values of the stress model adhesive intercellular forces, which compete with the repulsive ones because cells, unless too packed, like to stick together to form multicellular aggregates. This behavior is reproduced, for instance, by the function $\Sigma$ proposed in \cite{MR2379886} (Fig. \ref{fig:Sigma1}, right):
\begin{equation}
	\Sigma(s)=
	\begin{cases}
		\dfrac{1}{s}\log{\left\vert\dfrac{s}{\phith}\right\vert} & \text{if\ } n=1 \\
		\\[-0.3cm]
		\dfrac{n}{n-1}\cdot\dfrac{{\vert s\vert}^{n-1}-\phith^{n-1}}{s} & \text{if\ } n>1,
	\end{cases}
	\label{eq:Sigma-tos}
\end{equation}
where now $\phith$ denotes the \emph{stress-free volume ratio} corresponding to unstressed tissue (i.e., $\Sigma(\phith)=0$). Notice that $\Sigma(s)\leq 0$ for $\vert s\vert\leq\phith$, with $\Sigma(s)\to -\infty$ for $\vert s\vert\to 0$. The resulting constitutive function:
\begin{equation*}
	\Phi(s)=\vert s\vert^{n-1}s
\end{equation*}
is strictly increasing for all $n\geq 1$ and turns Eq.~\ref{eq:nonlin.diff} into the porous medium equation with nonlinear forcing term. Function \ref{eq:Sigma-tos} somehow summarizes qualitatively the trends of the functions \ref{eq:Sigma-tosambprez}, \ref{eq:Sigma-manmurr} at large volume ratios, indeed for $1\leq n\leq 2$ it is bounded from above and, if $n<2$, tends to zero when $s\to +\infty$ (saturation effect), whereas for $n>2$ it grows unboundedly and more and more steeply as $n$ increases.

\begin{figure*}
\centering
\includegraphics[width=0.75\textwidth,clip]{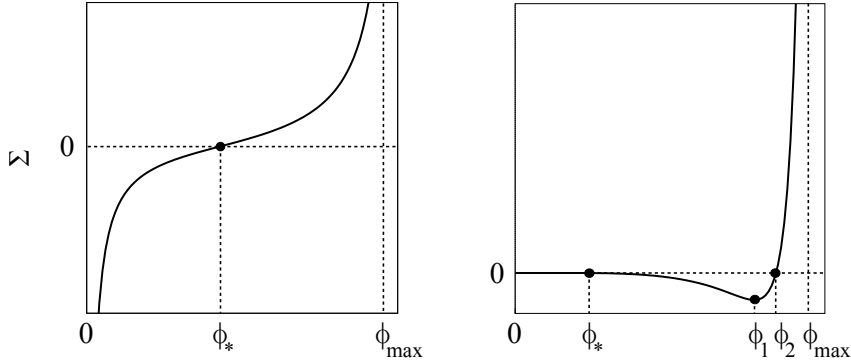}
\caption{From left to right, the intercellular stress function \ref{eq:Sigma-deanprez} and that used in \cite{byrne2003mst}.}
\label{fig:Sigma2}
\end{figure*}

In \cite{deangelis2000adm} the authors introduce the idea, next borrowed by a few other papers \cite{MR1909425,byrne2003mst,chaplain2006mml}, that the cell stress blows up when $\phi$ approaches a maximum allowed volume ratio $\phimax\in (0,\,1]$, which corresponds to an asymptote of the function $\Sigma$ for $s=\phimax$ (Fig \ref{fig:Sigma2}, left): 
\begin{equation}
	\Sigma(s)=p(\phimax-\phith)\frac{s-\phith}{\vert s\vert(\phimax-s)},
	\label{eq:Sigma-deanprez}
\end{equation}
where $p>0$ is a constant coefficient and $\phith$ denotes again the stress-free volume ratio. Notice that, $\Sigma$ being infinite at $\phimax$, $\Phi'$ is also infinite, which violates hypothesis (H1). However, strict monotonicity of $\Phi$ on $(-\infty,\,\phimax)$ is preserved by function \ref{eq:Sigma-deanprez}, while in general it fails on $[0,\,\phimax)$ with the function proposed in \cite{MR1909425,byrne2003mst} (Fig. \ref{fig:Sigma2}, right). The latter is such that $\Sigma(s)\leq 0$ for $\phith\leq s\leq\phi_2$ with a local minimum at $s=\phi_1\in(\phith,\,\phi_2)$ in order to take into account cell adhesiveness at low volume ratios, then $\Sigma(s)>0$ for $\phi_2<s<\phimax$ with $\Sigma(s)\to +\infty$ when $s\to\phimax$ to reproduce cell repulsion. In addition, they set $\Sigma(s)=0$ for $0\leq s\leq\phith$ to model that far apart cells ignore each other. We refer the reader to \cite{byrne2003mst} for the analytical expression of such a $\Sigma$. In \cite{chaplain2006mml} the adhesion region is instead eliminated by taking $\phith\equiv\phi_2$ and setting $\Sigma\equiv 0$ before $\phith$. In practice, the resulting function is the positive part of function \ref{eq:Sigma-deanprez}, which restores the monotonicity (however not strict) of $\Phi$ before $\phimax$.

It can be argued that condition $\Sigma(s)\to +\infty$ for $s\to\phimax$ is mainly intended to enforce the bound $\phi\leq\phimax$ on the solution to Eq.~\ref{eq:system-phi}, trusting to the physical intuition that a strong, infinite in the limit, intercellular pressure prevents cells from packing too much. Nevertheless, we will prove that this is not necessary to achieve the proper upper bound on $\phi$.

Before concluding the discussion on the constitutive function, we introduce a condition that will play a fundamental role in the forthcoming theory (cf. also \cite{fadimba1995ape,laurencot2005cmt}):
\begin{definition}
We say that a function $f:\R\to\R$ is \emph{$\Phi$-Lipschitz continuous} in an interval $I\subseteq\R$, with constant $\LipPhi{f}>0$, if
\begin{equation*}
	\vert f(s_2)-f(s_1)\vert^2\leq\LipPhi{f}\left(\Phi(s_2)-\Phi(s_1)\right)(s_2-s_1), \quad \forall\,s_1,\,s_2\in I.
\end{equation*}
\end{definition}
Notice that the right-hand side is nonnegative because $\Phi$ is increasing.

\subsection{The growth term}
The function $\Gamma$ expresses proliferation or death of cells in connection with the availability of nutrient. The basic principles inspiring the modeling of $\Gamma$ are usually that few cells proliferate less than many cells, that proliferation stops when cells fill all the available space, and that cells need a minimum amount of nutrient to survive.

For instance, in \cite{MR1920584} they use
\begin{equation*}
	\Gamma(\phi,\,c)=\phi(1-\phi)\frac{S_0 c}{1+S_1c}-\phi\frac{S_2+S_3c}{1+S_4c},
	\label{eq:g-breward}
\end{equation*}
where the first term at the right-hand side is the cell growth due to mitosis (fostered by the availability of nutrient), the second term is the cell death (enhanced by the lack of nutrient), and $S_0,\,\dots,\,S_4>0$ are parameters. Notice that cell growth is zero whenever $\phi=0$ (no cells) or $\phi=1$ (cells occupy the whole space, recall the saturation constraint) and that nutrient chemistry reminds of the Michaelis-Menten kinetic.

A simpler form of $\Gamma$, including an explicit nutrient threshold triggering the switch between cell proliferation and death, is that proposed in \cite{MR2379886}:
\begin{equation*}
	\Gamma(\phi,\,c)=\gamma\phi(1-\phi)(c-\cth)
	\label{eq:g-tosin}
\end{equation*}
where $\gamma,\,\cth>0$ are parameters. In this case, for $\phi\in[0,\,1]$ it results $\Gamma>0$ if $c>\cth$, $\Gamma<0$ if $c<\cth$, hence $\Gamma$ acts as a source or a sink, respectively, according to the values taken by $c$. Unfortunately, such a $\Gamma$ has the drawback of vanishing for $\phi=1$ regardless of $c$, which implies that cells no longer die once they have reached their maximum concentration, even if the nutrient falls below the survival threshold. This difficulty may be overcome by the following correction:
\begin{equation}
	\Gamma(\phi,\,c)=\gamma_1\phi(1-\phi)(c-\cth)^{+}-\gamma_2\phi(c-\cth)^{-},
	\label{eq:gamma-tosin-corr}
\end{equation}
where ${(\cdot)}^{+}$, ${(\cdot)}^{-}$ denote positive and negative parts of their arguments and $\gamma_1,\,\gamma_2>0$ are parameters.

In \cite{astanin2009mmw,MR2455396} proliferation and death of cells are linked to energy reasonings, specifically ATP consumption through nutrient oxidation along the cell cycle, and the following form of $\Gamma$ is proposed:
\begin{equation}
	\Gamma(\phi,\,c) = \frac{k\ln{2}}{Q^0_M}\phi(f(\phi)g(c)-\hat\theta)^{+}-
		\frac{k\ln{2}}{\hat\theta\tau_{1/2}}\phi(f(\phi)g(c)-\hat\theta)^{-},
	\label{eq:g-astanin}
\end{equation}
where $k,\,\hat\theta,\,Q^0_M,\,\tau_{1/2}>0$ are the reaction rate of oxidation, the total rate of ATP consumption, the average cost of the full cell cycle, and the half-life of dying cells, respectively. The function $f$ (of $\phi$ alone) is introduced to inhibit proliferation (namely, ATP production) in overly dense tissues, whereas the function $g$ (of $c$ alone) defines the effectiveness of the oxidative process in terms of nutrient supply. In more detail, $f$ vanishes when $\phi$ attains the maximum threshold $\phimax$ and $g$ increases with $c$. Examples of such functions are $f(\phi)=\phimax-\phi$, $g(c)=c$.

In \cite{chaplain2006mml,MR2471305} it is suggested that cell duplication and death may occur on a stress-induced basis, depending on the level of compression felt from the surrounding tissue:
\begin{equation}
	\Gamma_\alpha(\phi,\,c)=\gamma_\alpha\phi\Heav(\Sigma^\ast_\alpha-\Sigma(\phi))
		\left(\frac{c}{\cth}-1\right)-\delta_\alpha\phi\Heav(\Sigma(\phi)-\Sigma^{\ast\ast}_\alpha),
	\label{eq:gamma-stressind}
\end{equation}
$\Heav$ being (possibly a mollification of) the Heaviside function and $\Sigma_\alpha^\ast\leq\Sigma_\alpha^{\ast\ast}$ two stress thresholds. When the actual stress acting on the cells is above the first threshold proliferation is inhibited. If the stress further grows above the second threshold then cell apoptosis is triggered. Since the sensitivity to the stress affects the way in which a cell runs through its vital cycle, which is what mainly breaks down when mutations change a normal cell into an abnormal one, the thresholds $\Sigma^\ast_\alpha$, $\Sigma^{\ast\ast}_\alpha$ are expected to be different for tumor and host cells, in particular $\Sigma^{\ast(\ast)}_T>\Sigma^{\ast(\ast)}_H$ \cite{chaplain2006mml}.

Inspired by the examples above, we see that a convenient structure of the growth terms is
\begin{equation*}
	\Gamma_\alpha(\phi,\,c)=\sum_{\nu=p,\,d}\gamma_\alpha^\nu f_\alpha^\nu(\phi)g_\alpha^\nu(c)-\delta\phi,
\end{equation*}
which allows one to account for possible differences in the mechanisms of proliferation ($\nu=p$) and death ($\nu=d$) of tumor and host cells, as depicted for instance by Eqs.~\ref{eq:gamma-tosin-corr}, \ref{eq:gamma-stressind}. In more detail, the coefficients $\gamma_\alpha^\nu$ are specific proliferation and death rates for tumor and host cells. The terms $f_\alpha^\nu(\phi)g_\alpha^\nu(c)$ refer to joint stress-nutrient induced proliferation and apoptosis, under the assumption that the cell stress is directly determined by the volume ratio $\phi$. Finally, the term $-\delta\phi$ accounts for natural cell apoptosis without influence from the distribution of nutrient, the coefficient $\delta$ being the same for both cell populations.

Some technical requirements on the previous terms are now stated, taking into account a generic maximum volume ratio $\phimax\in (0,\,1]$ allowed for cell packing\footnote{If $\phimax<1$ then $1-\phimax$ is the constant volume ratio of the rigid non-remodeling extracellular matrix, and $\phi_\ell:=\phimax-\phi$ the volume ratio of the extracellular fluid (not explicitly modeled) filling the interstices within the mixture to enforce the saturation constraint.}:
\begin{enumerate}
\item[\bf (H2)] $\gamma_\alpha^p,\,\delta>0$, $\gamma_\alpha^d<0$
\item[\bf (H3)] $f_\alpha^\nu$ bounded, $\Phi$-Lipschitz continuous, and nonnegative in $[0,\,\phimax]$
\begin{enumerate}
\item[\bf (H3.1)] $f_\alpha^p\geq 0$ in $(-\infty,\,0)$, $f_\alpha^p\leq 0$ in $(\phimax,\,+\infty)$, $f_\alpha^p(0)=f_\alpha^p(\phimax)=0$
\item[\bf (H3.2)] $f_\alpha^d\leq 0$ in $(-\infty,\,0)$, $f_\alpha^d\geq 0$ in $(\phimax,\,+\infty)$, $f_\alpha^d(0)=0$
\item[\bf (H3.3)] $f_\alpha^d$ nondecreasing in $[0,\,\phimax]$
\end{enumerate}
\item[\bf (H4)] $g_\alpha^\nu$ locally bounded, Lipschitz continuous, and nonnegative in $\R$.
\end{enumerate}

Although not explicitly required, we incidentally notice that further assumptions may be suggested by physical considerations. For instance, nutrient-induced proliferation $g_\alpha^p$ may be non-decreasing and nutrient-induced death $g_\alpha^d$ non-increasing.

\subsection{The absorption term}
The function $Q$ describes the uptake of nutrient by cells. A common and simple prototype of this term is (see e.g., \cite{anderson1998cdm,byrne2003mst,MR2379886})
\begin{equation*}
	Q(\phi,\,c)=-\lambda\phi c,
\end{equation*}
where $\lambda>0$ is a parameter. This form translates the basic principle that the absorption of nutrient depends simultaneously on the number of cells present in the domain and on the quantity of nutrient available to them, including that few cells uptake, on the whole, few nutrient even if the latter is abundantly supplied, and, conversely, that few nutrient can poorly feed a large cell population. More general absorption terms are introduced in the series of papers \cite{MR2111919,MR2180715,MR2167659,bertuzzi2007cra}:
\begin{equation*}
	Q(\phi,\,c)=-\phi\varphi(c),
\end{equation*}
where $\varphi$ is then suggested to be of Michaelis-Menten type (cf. also \cite{MR1920584,manoussaki2003mma}). This allows the authors to generalize $Q$ to the case of multiple species of nutrients, including in $\varphi$ the dependence on the various concentrations \cite{MR2455391}.

In \cite{MR2455396} the absorption term is directly linked to the chemical mechanisms internal to the cells responsible for proliferation and death, and the following form of $Q$ is proposed:
\begin{equation*}
	Q(\phi,\,c)=-\lambda\phi f(\phi)g(c),
\end{equation*}
where $\lambda>0$ is the oxygen uptake rate and the functions $f,\,g$ are the same as in Eq.~\ref{eq:g-astanin}.

In view of these examples, and of possible differences in the consumption of nutrient by normal and abnormal cells, we refer to the following structure of the absorption term:
\begin{equation*}
	Q_\alpha(\phi,\,c)=-\lambda_\alpha h_\alpha(\phi)q_\alpha(c).
\end{equation*}
Here, $\lambda_\alpha$ is the specific absorption rate of the cell population $\alpha$ while the functions $h_\alpha$ account for cell-dependent uptake dynamics, which may differ for cancer and host cells because of different internal chemistry. Finally, $q_\alpha$ is the chemical consumption rate of nutrient, which instead is much likely to be the same for tumor and host cells as it depends essentially on the chemical properties of the environment and of the nutrient itself rather than on cell genetics.

Some technical assumptions on these terms are in order, namely:
\begin{enumerate}
\item[\bf (H5)] $\lambda_\alpha>0$
\item[\bf (H6)] $h_\alpha$ $\Phi$-Lipschitz continuous and nonnegative in $\R$
\item[\bf (H7)] $q_\alpha$, locally bounded and nonnegative in $[0,\,+\infty)$
\begin{enumerate}
\item[\bf (H7.1)] $q_\alpha(0)=0$, $q_\alpha\leq 0$ in $(-\infty,\,0)$
\item[\bf (H7.2)] $q_\alpha$ nondecreasing in $[0,\,+\infty)$.
\end{enumerate}
\end{enumerate}

Again, additional assumptions, not strictly needed for theoretic issues, may be welcome for physical consistency. For instance, one may require the cell-dependent absorption rate to vanish when no cells are present, i.e., $h_\alpha(0)=0$. We anticipate that we will actually use this assumption when addressing the existence of stationary solutions.

\section{Boundary, interface, and initial conditions}
\label{sect:conditions}
\subsection{Boundary conditions}
\label{sect:bc}
The parabolic character of Eqs.~\ref{eq:nonlin.diff}, \ref{eq:system-c} calls for conditions on the whole boundary $\partial\Omega\times(0,\,\Tmax]$ in order to properly define the corresponding mathematical problems. The most common boundary conditions in tumor growth problems refer to characteristic values of some quantities at the periphery of the cell tissue (Dirichlet conditions) or to their fluxes across the external shell of the portion of tissue under consideration (Neumann or Robin conditions). Often boundary conditions account for interactions of the cell aggregate with the outer environment (not explicitly modeled).

Let us consider, at first, the chemicals nourishing the cells. Nutrient is supplied to the cells by the external environment e.g., by a nearby vasculature, whence, dissolved in the extracellular fluid, it diffuses through the cell tissue. The presence of the vasculature can be modeled as a boundary condition for Eq.~\ref{eq:system-c}: one identifies a portion of the boundary, say $\partial_b\Omega$ the subscript $b$ standing for ``blood'', with a blood vessel, whence the nutrient carried by blood flows into the tissue. One has therefore a condition on the normal flux $-D\nabla{c}\cdot\n$, $\n$ being the outward normal unit vector to $\partial_b\Omega$. This may be either a Neumann condition, if the flux is given, or a Robin condition, if the flux is expressed in terms of $c$ itself, for instance as $\eta(c-c_b)$ where $\eta,\,c_b>0$ are parameters. Such a condition implies that the flow of nutrient across the vessel wall depends on the quantity of nutrient already present in the tissue, with respect to a characteristic threshold $c_b$. In more detail, if the concentration $c$ equals the characteristic concentration $c_b$ then there is no flux, while if $c<c_b$ then the nutrient flows from the blood to the tissue. Conversely, if $c>c_b$ then the nutrient flows from the tissue to the blood, namely blood carries away the nutrient in excess. However, this situation is not expected to happen because blood is the only source of nutrient in this model. The parameter $c_b$ can be identified with the average physiological concentration of nutrient in the blood, whereas $\eta$ is a characteristic speed of filtration through the vessel wall. We point out that Robin's is the biologically most appropriate condition to be imposed at the vessel wall, see \cite{keener1998mp,truskey2009tpb}, and in this respect Neumann condition should be regarded as a zeroth-order approximation. For this reason, we ultimately set
\begin{equation*}
	-D\nabla{c}\cdot\n=\eta(c-c_b) \quad \text{on\ } \partial_b\Omega\times(0,\,\Tmax].
\end{equation*}

In the remaining portion of the boundary, say $\partial_f\Omega$ the subscript $f$ standing for ``far'' (in the sense that this boundary is far from the vasculature where tumor growth mainly takes place), one might prescribe the concentration $c_b$ for conveying the idea that the quantity of nutrient is there the average physiological one in healthy tissues:
\begin{equation*}
	c=c_b \quad \text{on\ } \partial_f\Omega\times(0,\,\Tmax],
\end{equation*}
which is a Dirichlet boundary condition.

We suppose $\partial_b\Omega\cup\partial_f\Omega=\partial\Omega$, $\operatorname{Int}{\partial_b\Omega}\cap\operatorname{Int}{\partial_f\Omega}=\emptyset$ for consistency (cf. Fig. \ref{fig:domain}) but we do not exclude that either $\partial_f\Omega$ or $\partial_b\Omega$ is empty, if for instance the vasculature fully surrounds the cellular tissue ($\partial_f\Omega=\emptyset$, $\partial\Omega\equiv\partial_b\Omega$) or if one is concerned with avascular tumors ($\partial_b\Omega=\emptyset$, $\partial\Omega\equiv\partial_f\Omega$).

\begin{figure*}
\centering
\includegraphics[width=\textwidth,clip]{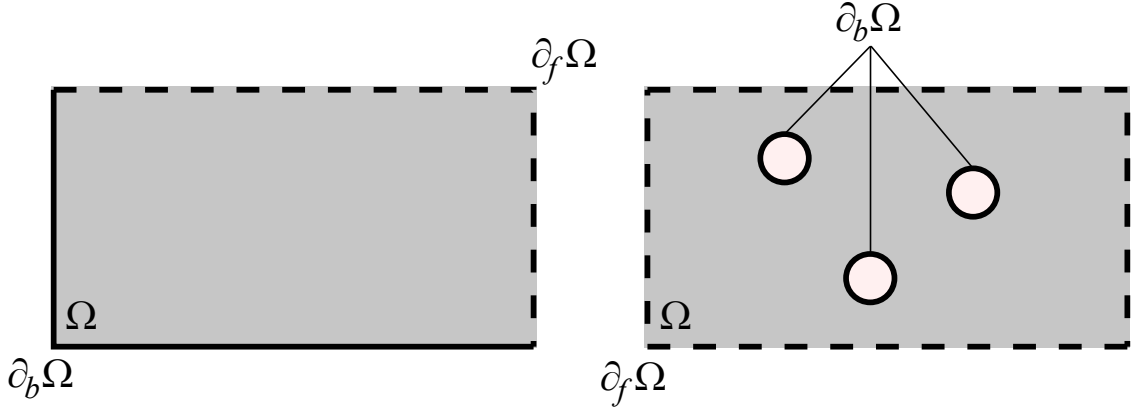} \qquad
\caption{Left: blood vessels and far boundaries are among the edges of $\Omega$. Right: blood vessels are internal holes to $\Omega$, whereas the far boundary coincides with the outer edges of $\Omega$.}
\label{fig:domain-edges}
\end{figure*}

Concerning the cells, since we are considering \emph{in situ} tumor growth, it is reasonable to assume that they do not penetrate the vasculature, which corresponds to no flux across the boundary $\partial_b\Omega$:
\begin{equation*}
	\kappa_m\nabla{\Phi(\phi)}\cdot\n=0 \quad \text{on\ } \partial_b\Omega\times(0,\,\Tmax].
\end{equation*}
Recalling Eq.~\ref{eq:vel.cell}, this condition can be rewritten as $\phi\v\cdot\n=0$, which says that the normal component of the velocity of the cells vanishes at the blood vessel. Cells neither cross the vessel wall nor detach from it, but they can slide along the vessel because no restriction is imposed on the tangential component of their velocity. Instead, at the far boundary $\partial_f\Omega$ the tissue is essentially relaxed, for the main dynamics is expected to occur near the vasculature. Therefore, considering that in the present setting the stress state is determined by the cell volume ratio, the natural condition for $\phi$ on $\partial_f\Omega$ is some physiological value
\begin{equation*}
	\phi=\phith\in(0,\,\phimax) \quad \text{on\ } \partial_f\Omega\times(0,\,\Tmax],
\end{equation*}
possibly coinciding with the stress-free value if the function $\Sigma$ admits one.

Boundary conditions outlined here are mainly indicative, and may be detailed more precisely for specific domains $\Omega$. In particular, the boundaries $\partial_b\Omega$, $\partial_f\Omega$ need not be connected, especially when several blood vessels are present (in which case $\partial_b\Omega$ will be presumably the union of several connected components). As an example, one may consider the applications illustrated in \cite{MR2471305}, which deal with two-dimensional tumor growth in a rectangular domain $\Omega$. In some cases, blood vessels coincide with one or more (not necessarily adjacent) edges of $\Omega$, the remaining ones forming instead $\partial_f\Omega$ (Fig. \ref{fig:domain-edges}, left). In other cases, all of the outer edges of $\Omega$ define $\partial_f\Omega$, whereas blood vessels are circular holes within the rectangle. The union of their circumferences is then $\partial_b\Omega$, which plays the role of an internal boundary to $\Omega$ (Fig. \ref{fig:domain-edges}, right).

\subsection{Interface conditions}
\label{sect:interface-cond}
The interface $S(t)$ separating the two sub-domains $\Omega_T(t)$, $\Omega_H(t)$ is a material one for the cells, meaning that the latter cannot detach from it on either side. This entails the continuity of their normal velocity across $S(t)$, i.e., recalling Eq.~\ref{eq:vel.cell},
\begin{equation}
	\jump{\v\cdot\n}=-\kappa_m\jump{\nabla{(\phi\Sigma(\phi))}}\cdot\n=0, \quad \forall\,t\in(0,\,\Tmax].
	\label{eq:interface-contv}
\end{equation}
In addition, classical theory of continuum mechanics requires the continuity of the stress of the interfacing materials:
\begin{equation*}
	\jump{\phi\Sigma(\phi)\n}=0, \quad \forall\,t\in(0,\,\Tmax].
	\label{eq:interface-cont}
\end{equation*}
For a continuous stress function $\Sigma$, this is fulfilled if $\jump{\phi}=0$, which, coupled with Eq.~\ref{eq:interface-contv}, yields $\kappa_m\jump{\phi\nabla{(\phi\Sigma(\phi))}}\cdot\n=0$. Hence the condition necessary for the validity of Eq.~\ref{eq:system-phi} on $\Omega$ is compatible with the standard interface conditions of continuum mechanics. Owing to this argument, we ultimately impose
\begin{equation*}
	\kappa_m\jump{\phi\nabla{(\phi\Sigma(\phi))}}\cdot\n=0 \quad \text{on\ } S(t),\,t\in(0,\,\Tmax],
\end{equation*}
because it is this interface condition which is really needed in our problem.

In the sequel we will assume that the time evolution of the interface $S(t)$, and consequently those of the sub-domains $\Omega_T(t)$, $\Omega_H(t)$, is given. In other words, in the subsequent \emph{a priori} estimates we will disregard the explicit coupling of their dynamics with the ones of the cells. We defer to a forthcoming work a more detailed investigation of such additional issues of the problem.

\subsection{Initial conditions}
The initial distributions of cells and nutrient in $\Omega$ are described by two functions $\phi_0,\,c_0:\Omega\to\R$, which should not exceed the expected physiological ranges. Therefore:
\begin{enumerate}
\item[\bf (H8)] $0\leq\phi_0\leq\phimax$, $0\leq c_0\leq c_b$ in $\Omega$.
\end{enumerate}
In particular, as observed in Sect.~\ref{sect:bc}, nutrient concentration is expected to stay below the average physiological value $c_b$ because, in the present context, blood is the only source of nutrient, which is then consumed by cells.

\section{Notations and theoretic background}
\label{sect:notations}
In this section we prepare to tackle the analysis of the above-discussed models. We fix the main notations and quickly recall some essential technical material.

\paragraph{Domain.} The spatial domain is an open and bounded set $\Omega\subset\R^d$ (physically $d=1,\,2,\,3$) with smooth boundary $\partial\Omega$. We use $\sigma$ for the coordinate along $\partial\Omega$ and $d\sigma$ for the $(d-1)$-dimensional Hausdorff measure in $\R^d$. The time interval of interest is $[0,\,\Tmax]$, with finite final time $\Tmax>0$. We denote by $\QT$ the cylinder $\Omega\times(0,\,\Tmax]\subset\R^{d+1}$.

\paragraph{Functions.} In general, we regard $\phi,\,c$ as functions parameterized by time. Particularly, when we want to emphasize the dependence on $t$ we write $\phi(t),\,c(t)$ for the functions of $x$ defined as $(\phi(t))(x)=\phi(t,\,x)$, $(c(t))(x)=c(t,\,x)$.

We will occasionally denote the time derivative by the subscript $t$ for short (e.g., $\phi_t=\partial_t\phi$, $c_t=\partial_t c$).

We write $\Lip{u}$ for the Lipschitz constant of a function $u$, and $u^+$, $u^-$ for its positive and negative part:
$u^+:=\max\{u,\,0\}$, $u^-:=\max\{0,\,-u\}$.

The indicator function of a set $A$ is $\1_A$, i.e., $\1_A(x)=1$ if $x\in A$, $\1_A(x)=0$ if $x\not\in A$.

\paragraph{Function spaces.} $L^2(\Omega)$ is the usual Hilbert space of square-integrable functions in $\Omega$, endowed with the norm
\begin{equation*}
	\norm{\cdot}_0:=\left(\intl_\Omega\abs{\cdot(x)}^2\,dx\right)^{1/2}.
\end{equation*}
We will also use a weaker $L^2$ norm denoted by $\wnorm{\cdot}_0$ (cf. Appendix \ref{app:poisson}).

$H^1(\Omega)$ is the Sobolev space of $L^2$ functions with square-integrable (weak) derivatives in $\Omega$, endowed with the norm
\begin{equation*}
	\norm{\cdot}_1:=\left(\norm{\cdot}_0^2+\norm{\nabla{\cdot}}_0^2\right)^{1/2}.
\end{equation*}
For $u\in H^1(\Omega)$, Stampacchia's Theorem asserts that $u^+,\,u^-\in H^1(\Omega)$ as well, with $\nabla{u^+}=\nabla{u}\1_{\{u>0\}}$, $\nabla{u^-}=-\nabla{u}\1_{\{u<0\}}$.

$H^1_{0,\B}(\Omega)$ is the space of $H^1$ functions whose trace vanishes on $\B\subseteq\partial\Omega$. The $L^2$ norm of the trace along $\B$ is
\begin{equation*}
	\norm{\cdot}_{0,\B}:=\left(\intl_\B\abs{\cdot}^2\,d\sigma\right)^{1/2}.
\end{equation*}
We will deal, in particular, with the case $\B=\partial_b\Omega$.

$L^2(0,\,\Tmax;\,H^1(\Omega))$ is the space of functions of $t$, valued in $H^1(\Omega)$, which are square-integrable in the interval $[0,\,\Tmax]$, endowed with the norm
\begin{equation*}
	\norm{\cdot}_{L^2_tH^1_x}:=\left(\intl_0^{\Tmax}\norm{\cdot(t)}_1^2\,dt\right)^{1/2}.
\end{equation*}
The spaces $L^2(0,\,\Tmax;\,L^2(\Omega))$, $L^2(0,\,\Tmax;\,H^1_{0,\partial_f\Omega}(\Omega))$ are defined analogously, and their respective norms denoted similarly. In particular, in the former we will use the norm
\begin{equation*}
	\wnorm{\cdot}_{L^2_tL^2_x}:=\left(\intl_0^{\Tmax}\wnorm{\cdot(t)}_0^2\,dt\right)^{1/2},
\end{equation*}
$\wnorm{\cdot}_0$ being defined in Appendix \ref{app:poisson}.

We introduce the following shorthand notations:
\begin{itemize}
\item $\VT:=L^2(0,\,\Tmax;\,H^1(\Omega))\times L^2(0,\,\Tmax;\,H^1(\Omega))$
\item $\V:=H^1(\Omega)\times H^1(\Omega)$
\item $V_f:=H^1_{0,\partial_f\Omega}(\Omega)$
\item $V_f'$ for the dual space of $V_f$.
\end{itemize}

We use the symbol $\dual{\cdot}{\cdot}$ for the duality pairing between $V_f$ and $V_f'$. Given $u\in L^2(0,\,\Tmax;\,V_f)$ with $u_t\in L^2(0,\,\Tmax;\,V_f')$, it results
\begin{equation*}
	\dual{u_t(t)}{u(t)}=\frac{1}{2}\frac{d}{dt}\norm{u(t)}_0^2.
\end{equation*}

We use the abbreviation ``a.e.'' for properties which hold ``almost everywhere'' with respect to the Lebesgue measure.

If $I\subseteq\R$ is an interval, $C^0(I)$ is the space of continuous functions in $I$, endowed with the norm
\begin{equation*}
	\norm{\cdot}_\infty:=\sup_{x\in I}\abs{\cdot(x)}.
\end{equation*}

\paragraph{Estimates and constants.} We write
\begin{equation*}
	a\lesssim b \quad \text{to mean} \quad \exists\,C>0\,:\,a\leq Cb,
\end{equation*}
the constant $C$ being understood to be independent of both $a$ and $b$, when the specific value of $C$ is unimportant. In this case, $C$ may vary from line to line within the same computation without explicit notice.

\paragraph{Inequalities.}
\begin{itemize}
\item Cauchy's: for all $a,\,b\in\R$, $ab\leq\frac{\epsilon a^2}{2}+\frac{b^2}{2\epsilon}$ with arbitrary $\epsilon>0$.
\item Poincar\'e's: if $u\in H^1_{0,\B}(\Omega)$ then $\norm{u}_0\lesssim\norm{\nabla{u}}_0$. The constant involved in this estimate, denoted by $C_P$, depends in general on $\Omega$ and $\B$.
\item Cauchy-Schwartz's: $\abs{\int_\Omega u(x)v(x)\,dx}\leq\norm{u}_0\norm{v}_0$ for all $u,\,v\in L^2(\Omega)$. 
\end{itemize}

\section{The time-dependent problem}
\label{sect:time.dependent}
In this section we consider the initial/boundary-value problem
\begin{equation}
	\left\{
	\begin{array}{rcll}
		\dfrac{\partial\phi}{\partial t}-\kappa_m\Delta{\Phi(\phi)} & = & \Gamma(t,\,x,\,\phi,\,c) & \text{in\ }
			\begin{cases}
				\Omega_T(t),\,\Omega_H(t) \\
				t\in (0,\,\Tmax]
			\end{cases}
			\\
		\dfrac{\partial c}{\partial t}-D\Delta{c} & = & Q(t,\,x,\,\phi,\,c) & \text{in\ } \QT \\[0.5cm]
		\begin{array}{rcl}
			\kappa_m\nabla\Phi(\phi)\cdot\n & = & 0, \\
			\phi & = & \phith, \\
			\kappa_m\jump{\nabla{\Phi(\phi)}}\cdot\n & = & 0 \\
			\phi(0) & = & \phi_0,
		\end{array} & &
		\begin{array}{rcl}
			-D\nabla{c}\cdot\n & = & \eta(c-c_b) \\
			c & = & c_b \\
			\\
			c(0) & = & c_0
		\end{array} & 
		\begin{array}{l}
			\text{on\ } \partial_b\Omega\times(0,\,\Tmax] \\
 			\text{on\ } \partial_f\Omega\times(0,\,\Tmax] \\
			\text{on\ } S(t),\,t\in(0,\,\Tmax]\\
			\text{in\ } \Omega
		\end{array}
	\end{array}
	\right.
	\label{eq:parab}
\end{equation}
along with the series of hypotheses (H1)--(H8), and we look for estimates of nonnegativity, boundedness, uniqueness, and continuous dependence on the data of the functions $\phi,\,c$. We assume that solutions exist to this problem, in the suitable sense specified below. Notice that, in \ref{eq:parab}, the interface condition has been conveniently rewritten in terms of $\nabla{\Phi(\phi)}$ using Eq.~\ref{eq:Phiprime}.

\begin{definition}[Weak solutions for the time-dependent problem]
A weak solution to Problem~\ref{eq:parab} is a pair $(\phi,\,c)\in\VT$ such that:
\begin{enumerate}
\item[(i)] $\phi_t,\,c_t\in L^2(0,\,\Tmax;\,V_f')$
\item[(ii)] $\Phi(\phi)\in L^2(0,\,\Tmax;\,H^1(\Omega))$
\item[(iii)] $\phi=\phith$, $c=c_b$ on $\partial_f\Omega\times(0,\,\Tmax]$ in the trace sense
\item[(iv)] $\phi(0)=\phi_0\in L^2(\Omega)$, $c(0)=c_0\in L^2(\Omega)$
\end{enumerate}
which satisfies
\begin{align}
	& \dual{\phi_t}{v_1}+\dual{c_t}{v_2} \nonumber \\
	& \phantom{\dual{\phi_t}{v_1}} +\intl_\Omega\left(\kappa_m\nabla{\Phi(\phi)}\cdot\nabla{v_1}+
		D\nabla{c}\cdot\nabla{v_2}\right)\,dx+\eta\intl_{\partial_b\Omega}(c-c_b)v_2\,d\sigma \nonumber \\
	& =\sum_{\alpha=T,\,H}\intl_{\Omega_\alpha(t)}\left(\sum_{\nu=p,\,d}\gamma_\alpha^\nu f_\alpha^\nu(\phi)
		g_\alpha^\nu(c)v_1-\delta\phi v_1-\lambda_\alpha h_\alpha(\phi)q_\alpha(c)v_2\right)\,dx
	\label{eq:parab-weak}
\end{align}
for all $v_1,\,v_2\in V_f$ and a.e. $t\in[0,\,\Tmax]$.
\end{definition}

\subsection{Nonnegativity and boundedness of the solution}
\begin{theorem}
Any weak solution $(\phi,\,c)\in\VT$ to Problem~\ref{eq:parab} satisfies
\begin{equation*}
	0\leq\phi(t,\,x)\leq\phimax, \quad 0\leq c(t,\,x)\leq c_b \quad \text{for a.e.\ } (x,\,t)\in\QT.
\end{equation*}
\label{theo:boundedness}
\end{theorem}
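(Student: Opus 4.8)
The plan is to establish the four inequalities by the standard Stampacchia truncation (energy) method, testing the weak formulation~\ref{eq:parab-weak} against suitable truncations of $\phi$ and $c$ and using the sign hypotheses (H2)--(H7) together with the structure of the boundary data. I would treat the four bounds one at a time, in the order: $c\geq 0$, then $c\leq c_b$, then $\phi\geq 0$, then $\phi\leq\phimax$; the nutrient bounds come first because the nutrient equation is linear in $c$ and decouples from the sign of $\phi$ through the nonnegativity of $h_\alpha$ and $q_\alpha$, so the cell bounds can then be proved knowing $0\leq c\leq c_b$.

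For $c\geq 0$: take $v_1=0$ and $v_2=-c^-=-(c-c_b)^-\!\!\restriction$ — more precisely $v_2=-c^-$, which lies in $V_f$ since $c-c_b=0$ on $\partial_f\Omega$ forces $c\geq 0$ there in the trace sense, hence $c^-$ has vanishing trace on $\partial_f\Omega$. Using $\dual{c_t}{-c^-}=\tfrac12\tfrac{d}{dt}\norm{c^-}_0^2$, Stampacchia's rule for $\nabla c^-$, and discarding the boundary term $\eta\int_{\partial_b\Omega}(c-c_b)(-c^-)\,d\sigma\geq 0$, the left side dominates $\tfrac12\tfrac{d}{dt}\norm{c^-}_0^2+D\norm{\nabla c^-}_0^2$. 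On the right, on the set $\{c<0\}$ one has $q_\alpha(c)\leq 0$ by (H7.1), $h_\alpha(\phi)\geq 0$ by (H6), $\lambda_\alpha>0$, so $-\lambda_\alpha h_\alpha(\phi)q_\alpha(c)(-c^-)=-\lambda_\alpha h_\alpha(\phi)q_\alpha(c)c^-\leq 0$; thus $\tfrac{d}{dt}\norm{c^-}_0^2\leq 0$, and $c^-(0)=0$ by (H8), giving $c^-\equiv 0$. For $c\leq c_b$: set $w:=c-c_b$, test with $v_2=w^+$ (again in $V_f$ since $w=0$ on $\partial_f\Omega$); $\dual{c_t}{w^+}=\tfrac12\tfrac{d}{dt}\norm{w^+}_0^2$, the diffusion term gives $D\norm{\nabla w^+}_0^2\geq0$, and the Robin term contributes $\eta\int_{\partial_b\Omega}(w^+)^2\,d\sigma\geq 0$. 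On $\{c>c_b\}$ we have $c\geq 0$ already, so $q_\alpha(c)\geq 0$ by (H7), hence $-\lambda_\alpha h_\alpha(\phi)q_\alpha(c)w^+\leq 0$; with $w^+(0)=0$ from (H8) we conclude $w^+\equiv 0$.

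For $\phi\geq 0$: test~\ref{eq:parab-weak} with $v_1=-\phi^-$, $v_2=0$ ($\phi^-$ has vanishing trace on $\partial_f\Omega$ since $\phi=\phith>0$ there). The key algebraic point is that $\nabla\Phi(\phi)\cdot\nabla(-\phi^-)=\Phi'(\phi)\,\nabla\phi\cdot\nabla(-\phi^-)=\Phi'(\phi)\abs{\nabla\phi^-}^2\geq 0$ by (H1), so the diffusion term has a good sign — here one uses $\Phi\in H^1$ (property (ii)) to justify the chain rule, e.g.\ via the $\Phi$-Lipschitz bound or by approximation. On the right, over $\{\phi<0\}$: the death term $\gamma_\alpha^d f_\alpha^d(\phi)g_\alpha^d(c)(-\phi^-)$ is $\leq 0$ because $\gamma_\alpha^d<0$ (H2), $f_\alpha^d(\phi)\leq 0$ for $\phi<0$ (H3.2), $g_\alpha^d(c)\geq 0$ (H4), $-\phi^-=\phi<0$; the proliferation term $\gamma_\alpha^p f_\alpha^p(\phi)g_\alpha^p(c)(-\phi^-)\leq 0$ since $\gamma_\alpha^p,g_\alpha^p\geq 0$, $f_\alpha^p(\phi)\geq 0$ for $\phi<0$ (H3.1), $-\phi^-\leq 0$; and $-\delta\phi\cdot(-\phi^-)=-\delta(\phi^-)^2\leq 0$. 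So again $\tfrac{d}{dt}\norm{\phi^-}_0^2\leq 0$, and $\phi^-(0)=0$. For $\phi\leq\phimax$: let $z:=\phi-\phimax$ and test with $v_1=z^+$ (in $V_f$ since $\phi=\phith<\phimax$ on $\partial_f\Omega$). The diffusion term again yields $\kappa_m\Phi'(\phi)\abs{\nabla z^+}_0^2\geq 0$. On $\{\phi>\phimax\}$: $\gamma_\alpha^p f_\alpha^p(\phi)g_\alpha^p(c)z^+\leq 0$ because $f_\alpha^p(\phi)\leq 0$ there (H3.1); $\gamma_\alpha^d f_\alpha^d(\phi)g_\alpha^d(c)z^+\leq 0$ because $\gamma_\alpha^d<0$ and $f_\alpha^d(\phi)\geq 0$ for $\phi>\phimax$ (H3.2); and $-\delta\phi\,z^+<0$ since $\phi>\phimax>0$. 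Hence $z^+\equiv 0$, and the four bounds are proved.

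The main obstacle is analytic rather than sign-chasing: justifying that each truncation $\phi^-$, $z^+$, $c^-$, $w^+$ is a legitimate test function in $V_f$ with the duality identity $\dual{u_t}{u}=\tfrac12\tfrac{d}{dt}\norm{u}_0^2$ applicable, and that the chain rule $\nabla\Phi(\phi)=\Phi'(\phi)\nabla\phi$ (needed to extract $\Phi'(\phi)\abs{\nabla\phi^\pm}^2$) holds in the weak setting. The subtle case is the truncation against $z=\phi-\phimax$: since the regularity hypothesis (ii) is stated for $\Phi(\phi)$ rather than $\phi$, one should phrase the cell estimates in terms of $\Phi(\phi)$ using that $\Phi$ is strictly increasing so $\{\phi>\phimax\}=\{\Phi(\phi)>\Phi(\phimax)\}$, test with $(\Phi(\phi)-\Phi(\phimax))^+$, and handle the time term via a Mignot-type lemma on $\dual{\phi_t}{(\Phi(\phi)-k)^+}$ — or, more simply, assume (as is implicit) enough regularity that the computations above are licit, integrate the resulting differential inequality $\tfrac{d}{dt}\norm{u(t)}_0^2\leq 0$ in $t$, and invoke (H8). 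All other steps are the routine sign bookkeeping carried out above.
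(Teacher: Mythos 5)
Your proposal is correct and follows essentially the same route as the paper: Stampacchia truncation, testing the weak formulation with the negative parts $\phi^-,c^-$ and the excesses $(\phi-\phimax)^+,(c-c_b)^+$, sign bookkeeping via (H1)--(H7), and integration in time using (H8). The only cosmetic differences are that the paper runs the $\phi$- and $c$-estimates simultaneously (both lower bounds in one identity, both upper bounds in another) rather than sequentially, and that your chain-rule worry is already covered by the definition of weak solution, which puts $\phi$ itself in $L^2(0,\Tmax;H^1(\Omega))$ alongside $\Phi(\phi)$.
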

\begin{proof}
First we establish that $\phi,\,c$ are a.e. nonnegative by showing $\phi^-,\,c^-=0$ a.e. in $\QT$. Choosing $v_1=\phi^-(t)$, $v_2=c^-(t)$ as test functions in Eq.~\ref{eq:parab-weak} reveals
\begin{align*}
	& -\frac{1}{2}\frac{d}{dt}\left(\norm{\phi^-(t)}_0^2+\norm{c^-(t)}_0^2\right)
		-\kappa_m\intl_\Omega\Phi'(-\phi^-)\abs{\nabla{\phi^-}}^2\,dx-D\norm{\nabla{c}^-(t)}_0^2 \\
	& \qquad -\eta\norm{c^-(t)}_{0,\partial_b\Omega}^2-\eta c_b\intl_{\partial_b\Omega}c^-\,d\sigma \\
	& =\!\!\!\sum_{\alpha=T,\,H}\intl_{\Omega_\alpha(t)}\!\!\!\left(\sum_{\nu=p,\,d}\gamma_\alpha^\nu f_\alpha^\nu(-\phi^-)
		g_\alpha^\nu(c)\phi^{-}+\delta(\phi^-)^2-\lambda_\alpha h_\alpha(\phi)q_\alpha(-c^-)c^-\right)\! dx.
\end{align*}

Because of hypotheses (H2)--(H7), the right-hand side is nonnegative for a.e. $t\in[0,\,\Tmax]$. Integrating from $0$ to $t\leq\Tmax$ and using $\phi_0^-,\,c_0^-=0$ (hypothesis (H8)) we get then
\begin{multline*}
	\norm{\phi^-(t)}_0^2+\norm{c^-(t)}_0^2+2\kappa_m\intl_0^t\intl_\Omega\Phi'(-\phi^-)\abs{\nabla{\phi^-}}^2\,dx\,d\tau \\
	+2D\intl_0^t\norm{\nabla{c^-}(\tau)}_0^2\,d\tau
		+2\eta\intl_0^t\norm{c^-(\tau)}_{0,\partial_b\Omega}^2\,d\tau+2\eta c_b\intl_{\partial_b\Omega}c^-\,d\sigma\leq 0
\end{multline*}
for all $t\in[0,\,\Tmax]$, whence $\phi^-,\,c^-=0$ a.e. in $\QT$ due to the nonnegativity of each term at the left-hand side (use hypothesis (H1) for the term containing $\Phi'$).

\bigskip

Next we claim $(\phi-\phimax)^+=(c-c_b)^+=0$ a.e. in $\QT$, which amounts to $\phi\leq\phimax$, $c\leq c_b$. Let $\tilde{\phi}:=(\phi-\phimax)^+$, $\tilde{c}:=(c-c_b)^+$ for brevity. Taking $v_1=\tilde{\phi}(t)$, $v_2=\tilde{c}(t)$ as test functions in Eq.~\ref{eq:parab-weak} yields
\begin{align*}
	& \frac{1}{2}\frac{d}{dt}\left(\norm{\tilde{\phi}(t)}_0^2+\norm{\tilde{c}(t)}_0^2\right)
		+\kappa_m\intl_\Omega\Phi'(\phimax+\tilde{\phi})\abs{\nabla{\tilde{\phi}}}^2\,dx+D\norm{\nabla{\tilde{c}}(t)}^2_0 \\
	& \qquad +\eta\norm{\tilde{c}(t)}_{0,\partial_b\Omega}^2 \\
	& =\sum_{\alpha=T,\,H}\intl_{\Omega_\alpha(t)}\Biggl(\sum_{\nu=p,\,d}\gamma_\alpha^\nu
		f_\alpha^\nu(\phimax+\tilde{\phi})g_\alpha^\nu(c)\tilde{\phi}-\delta(\phimax+\tilde{\phi})\tilde{\phi} \\
	& \qquad -\lambda_\alpha h_\alpha(\phi)q_\alpha(c_b+\tilde{c})\tilde{c}\Biggr)\,dx,
\end{align*}
the right-hand side being this time nonpositive for a.e. $t\in[0,\,\Tmax]$. Integrating in time and using now $\tilde{\phi}(0)=\tilde{c}(0)=0$ we obtain
\begin{multline*}
	\norm{\tilde{\phi}(t)}_0^2+\norm{\tilde{c}(t)}_0^2+
		2\kappa_m\intl_0^t\intl_\Omega\Phi'(\phith+\tilde{\phi})\abs{\nabla{\tilde{\phi}}}^2\,dx\,d\tau \\
	+2D\intl_0^t\norm{\nabla{\tilde{c}}(\tau)}_0^2\,d\tau
		+2\eta\intl_0^t\norm{\tilde{c}(\tau)}_{0,\partial_b\Omega}^2\,d\tau\leq 0
\end{multline*}
for all $t\in[0,\,\Tmax]$, whence the claim follows by arguing like in the previous point.
\end{proof}

\subsection{Uniqueness and continuous dependence on the initial data}
\begin{theorem}
Let $(\phi_i,\,c_i)\in\VT$, $i=1,\,2$, be two weak solutions of Problem~\ref{eq:parab} corresponding to the initial conditions $(\phi_{i,0},\,c_{i,0})\in L^2(\Omega)\times L^2(\Omega)$. Then
\begin{align}
	\wnorm{\phi_2-\phi_1}_{L^2_tL^2_x}^2 &+
		\intl_0^{\Tmax}\intl_\Omega\left(\Phi(\phi_2)-\Phi(\phi_1)\right)(\phi_2-\phi_1)\,dx\,dt \nonumber \\
	& +\norm{c_2-c_1}_{L^2_tH^1_x}^2+\intl_0^{\Tmax}\norm{(c_2-c_1)(t)}_{0,\partial_b\Omega}^2\,dt \nonumber \\
	& \lesssim\norm{\phi_{2,0}-\phi_{1,0}}_0^2+\norm{c_{2,0}-c_{1,0}}_0^2.
	\label{eq:apriori}
\end{align}
In particular, the solution corresponding to a given initial condition is unique.
\label{theo:uniqueness}
\end{theorem}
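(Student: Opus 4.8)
The plan is to subtract the two weak formulations, test against the differences of the solutions, and absorb all troublesome terms using the $\Phi$-Lipschitz and Lipschitz hypotheses together with Cauchy's and Poincaré's inequalities. Write $\phi:=\phi_2-\phi_1$, $c:=c_2-c_1$, and similarly for the data $\phi_0,c_0$. Since both solutions satisfy the same Dirichlet data on $\partial_f\Omega$, the differences $\phi,c$ lie in $V_f$ for a.e.\ $t$, so they are admissible test functions. Taking $v_1=\phi(t)$ and $v_2=c(t)$ in the difference of the two copies of Eq.~\ref{eq:parab-weak} gives, after using $\dual{u_t}{u}=\tfrac12\tfrac{d}{dt}\norm{u}_0^2$,
\begin{align*}
	& \tfrac12\tfrac{d}{dt}\left(\norm{\phi(t)}_0^2+\norm{c(t)}_0^2\right)
		+\kappa_m\intl_\Omega\left(\nabla\Phi(\phi_2)-\nabla\Phi(\phi_1)\right)\cdot\nabla\phi\,dx \\
	& \phantom{\tfrac12\tfrac{d}{dt}}+D\norm{\nabla{c}(t)}_0^2+\eta\norm{c(t)}_{0,\partial_b\Omega}^2
		= \mathcal{R}(t),
\end{align*}
where $\mathcal{R}(t)$ collects the differences of the reaction terms integrated over the subdomains $\Omega_\alpha(t)$ (which are common to both solutions by the standing assumption that $S(t)$ is given). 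Crucially, the second term on the left is \emph{not} $\norm{\nabla\Phi(\phi)}_0^2$; instead one keeps it as is and uses that its time-integral controls $\int_0^t\!\int_\Omega(\Phi(\phi_2)-\Phi(\phi_1))(\phi_2-\phi_1)$ up to a boundary/time-integration argument — more precisely, one also needs to reconstruct the $\wnorm{\cdot}_{L^2_tL^2_x}$ term on the left of Eq.~\ref{eq:apriori}, which is where Appendix~\ref{app:poisson} (the weak $L^2$ norm via an auxiliary Poisson problem) enters: the quantity $\int_\Omega(\Phi(\phi_2)-\Phi(\phi_1))\phi\,dx$ dominates $\wnorm{\phi(t)}_0^2$ after testing the Poisson problem, giving both left-hand side $\phi$-contributions of Eq.~\ref{eq:apriori} simultaneously.

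The heart of the estimate is bounding $\mathcal{R}(t)$. The natural-cell-death term contributes $-\delta\int_{\Omega_\alpha}\phi^2\,dx\le 0$, which only helps. For the proliferation/death terms of the form $\gamma_\alpha^\nu f_\alpha^\nu(\phi_i)g_\alpha^\nu(c_i)$, I would split each difference as
\[
	f_\alpha^\nu(\phi_2)g_\alpha^\nu(c_2)-f_\alpha^\nu(\phi_1)g_\alpha^\nu(c_1)
	=\bigl(f_\alpha^\nu(\phi_2)-f_\alpha^\nu(\phi_1)\bigr)g_\alpha^\nu(c_2)
	 +f_\alpha^\nu(\phi_1)\bigl(g_\alpha^\nu(c_2)-g_\alpha^\nu(c_1)\bigr).
\]
Here one exploits the a priori bounds of Theorem~\ref{theo:boundedness}: $\phi_i\in[0,\phimax]$ and $c_i\in[0,c_b]$ a.e., so $g_\alpha^\nu(c_i)$ and $f_\alpha^\nu(\phi_i)$ are bounded (by (H3), (H4)). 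The second piece is Lipschitz in $c$ (H4), yielding a term $\lesssim\int_\Omega\abs{c}\,\abs{\phi}\,dx\le\tfrac\epsilon2\norm{c}_0^2+\tfrac1{2\epsilon}\norm{\phi}_0^2$. The first piece, multiplied by the test function $\phi$, is the delicate one: by $\Phi$-Lipschitz continuity of $f_\alpha^\nu$ (H3), $\abs{f_\alpha^\nu(\phi_2)-f_\alpha^\nu(\phi_1)}^2\le\LipPhi{f_\alpha^\nu}(\Phi(\phi_2)-\Phi(\phi_1))(\phi_2-\phi_1)$, so Cauchy's inequality gives
\[
	\abs{\bigl(f_\alpha^\nu(\phi_2)-f_\alpha^\nu(\phi_1)\bigr)g_\alpha^\nu(c_2)\,\phi}
	\le \tfrac\epsilon2\,(\Phi(\phi_2)-\Phi(\phi_1))(\phi_2-\phi_1)+\tfrac{C}{2\epsilon}\phi^2,
\]
with the first term absorbed into the good left-hand-side quantity $\int_\Omega(\Phi(\phi_2)-\Phi(\phi_1))\phi\,dx$ by choosing $\epsilon$ small. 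The absorption term $\lambda_\alpha h_\alpha(\phi_i)q_\alpha(c_i)$ is handled identically: split into an $h$-difference (estimated by (H6), $\Phi$-Lipschitz, against the test function $c$) and a $q$-difference ($q_\alpha$ is locally Lipschitz on the bounded range $[0,c_b]$ by (H7); strictly, (H7) only gives local boundedness and monotonicity, so one needs $q_\alpha$ Lipschitz there — I would note this follows since (H7.2) monotone $+$ the $\Phi$-Lipschitz/Lipschitz framework, or simply add local Lipschitzness as used). The $h$-difference term produces $\tfrac\epsilon2(\Phi(\phi_2)-\Phi(\phi_1))\phi$ (absorbable) plus $\tfrac{C}{2\epsilon}c^2$; the $q$-difference gives $\lesssim\int\abs c\,\abs c\lesssim\norm{c}_0^2$, absorbable into $D\norm{\nabla c}_0^2$ via Poincaré since $c\in V_f$.

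Collecting everything, for suitably small $\epsilon$ one arrives at a differential inequality
\[
	\tfrac{d}{dt}\bigl(\norm{\phi(t)}_0^2+\norm{c(t)}_0^2\bigr)
	+\bigl(\text{nonnegative diffusion/boundary terms}\bigr)
	\le C\bigl(\norm{\phi(t)}_0^2+\norm{c(t)}_0^2\bigr),
\]
where the nonnegative terms on the left include $\kappa_m\int_\Omega(\Phi(\phi_2)-\Phi(\phi_1))\phi\,dx$ (up to a fraction, the rest absorbed), $D\norm{\nabla c}_0^2$, and $\eta\norm{c}_{0,\partial_b\Omega}^2$. Grönwall's lemma then bounds $\norm{\phi(t)}_0^2+\norm{c(t)}_0^2$ by $C\,e^{C\Tmax}(\norm{\phi_0}_0^2+\norm{c_0}_0^2)$ for all $t$; feeding this back into the integrated inequality and invoking the Poisson-problem comparison from Appendix~\ref{app:poisson} to turn $\int_0^{\Tmax}\!\int_\Omega(\Phi(\phi_2)-\Phi(\phi_1))\phi$ into the full left-hand side of Eq.~\ref{eq:apriori} (including $\wnorm{\phi}_{L^2_tL^2_x}^2$ and $\norm{c}_{L^2_tH^1_x}^2$) yields the claim. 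Uniqueness follows by setting $\phi_0=c_0=0$. \emph{Main obstacle:} keeping the cross term $(f_\alpha^\nu(\phi_2)-f_\alpha^\nu(\phi_1))g_\alpha^\nu(c_2)\phi$ (and the analogous $h$-term) from destroying the estimate — this is exactly why the $\Phi$-Lipschitz notion was introduced, and the bookkeeping of which pieces get absorbed into $\int(\Phi(\phi_2)-\Phi(\phi_1))\phi$ versus into $D\norm{\nabla c}_0^2$ versus into the Grönwall right-hand side is where care is required; a secondary technical point is the rigorous identification of $\int(\Phi(\phi_2)-\Phi(\phi_1))\phi$ with the weak-norm terms via the auxiliary elliptic problem.
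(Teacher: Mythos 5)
Your scaffolding (difference of the two weak formulations, splitting of the reaction terms, $\Phi$-Lipschitz bounds, Gr\"onwall) matches the paper's, but your choice of test function $v_1=\phi_2-\phi_1$ breaks the argument at the diffusion term, and this is a genuine gap rather than a detail. With that choice the left-hand side contains $\kappa_m\int_\Omega\nabla(\Phi(\phi_2)-\Phi(\phi_1))\cdot\nabla(\phi_2-\phi_1)\,dx$, whose integrand $(\Phi'(\phi_2)\nabla\phi_2-\Phi'(\phi_1)\nabla\phi_1)\cdot(\nabla\phi_2-\nabla\phi_1)$ has no sign for nonlinear $\Phi$ (take $\Phi'(\phi_2)\gg\Phi'(\phi_1)$ while $\nabla\phi_1$ dominates and points against $\nabla\phi_2$). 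You acknowledge this but then assert that its time integral ``controls'' $\int_\Omega(\Phi(\phi_2)-\Phi(\phi_1))(\phi_2-\phi_1)\,dx$ and that the latter ``dominates'' $\wnorm{\phi_2-\phi_1}_0^2$; neither claim holds. The second one in fact goes the wrong way: since $\Phi'$ degenerates at the origin, $\int_\Omega(\Phi(\phi_2)-\Phi(\phi_1))(\phi_2-\phi_1)\,dx$ can be arbitrarily small relative to $\norm{\phi_2-\phi_1}_0^2$, and the valid inequality is $\int_\Omega(\Phi(\phi_2)-\Phi(\phi_1))(\phi_2-\phi_1)\,dx\lesssim\norm{\phi_2-\phi_1}_0^2$, not its converse. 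Consequently the ``good left-hand-side quantity'' into which you propose to absorb the $\epsilon$-terms coming from the $\Phi$-Lipschitz estimates is simply not present on your left-hand side, and the whole absorption scheme collapses.

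The fix is the paper's: test the $\phi$-equation with $v_1=\P(\phi_2-\phi_1)$ rather than with $\phi_2-\phi_1$. Then $\dual{(\phi_2-\phi_1)_t}{\P(\phi_2-\phi_1)}=\tfrac12\tfrac{d}{dt}\wnorm{(\phi_2-\phi_1)(t)}_0^2$ produces the weak norm directly in the time-derivative term, and Eq.~\ref{eq:action_of_P}, applied with $f=\phi_2-\phi_1$ and $v=\Phi(\phi_2)-\Phi(\phi_1)\in V_f$ (both $\Phi(\phi_i)$ share the trace $\Phi(\phith)$ on $\partial_f\Omega$), turns the diffusion term into exactly $\kappa_m\int_\Omega(\Phi(\phi_2)-\Phi(\phi_1))(\phi_2-\phi_1)\,dx\geq 0$: this is simultaneously the signed quantity you need on the left and the absorber for the $\Phi$-Lipschitz estimates of $f_\alpha^\nu$ and $h_\alpha$. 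On the right-hand side the pairing is then against $\P(\phi_2-\phi_1)$, and $\norm{\P(\phi_2-\phi_1)}_0\lesssim\wnorm{\phi_2-\phi_1}_0$ closes the Gr\"onwall loop in the weak norm, after which integrating in time yields all four terms of Eq.~\ref{eq:apriori}. Your handling of the reaction terms (splitting, boundedness via Theorem~\ref{theo:boundedness}, monotonicity of $q_\alpha$) is otherwise the same as the paper's and carries over once the test function is corrected; also, your worry about Lipschitz continuity of $q_\alpha$ is unnecessary, since the paper only uses boundedness of $q_\alpha(c_2)$ together with the sign of $-h_\alpha(\phi_1)(q_\alpha(c_2)-q_\alpha(c_1))(c_2-c_1)$ provided by (H7.2).
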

\begin{proof}
It is sufficient to prove the estimate \ref{eq:apriori}, for then uniqueness easily follows out of it with $\phi_{1,0}=\phi_{2,0}$ and $c_{1,0}=c_{2,0}$.

For all $v_1,\,v_2\in V_f$, the difference $(\phi_2-\phi_1,\,c_2-c_1)$ of the two given solutions satisfies
\begin{align}
	& \dual{(\phi_2-\phi_1)_t}{v_1}+\dual{(c_2-c_1)_t}{v_2} \nonumber \\
	& \phantom{\dual{(\phi_2-\phi_1)_t}{v_1}}
		+\intl_\Omega\left(\kappa_m\nabla{(\Phi(\phi_2)-\Phi(\phi_1))}\cdot\nabla{v_1}+D\nabla{(c_2-c_1)}\cdot\nabla{v_2}\right)\,dx \nonumber \\
	& \phantom{\dual{(\phi_2-\phi_1)_t}{v_1}}
		+\eta\intl_{\partial_b\Omega}(c_2-c_1)v_2\,d\sigma \notag \\
	& =\!\!\!\sum_{\alpha=T,\,H}\intl_{\Omega_\alpha(t)}\!\!\Biggl(
		\sum_{\nu=p,\,d}\gamma_\alpha^\nu\{[f_\alpha^\nu(\phi_2)-f_\alpha^\nu(\phi_1)]g_\alpha^\nu(c_2)
			+f_\alpha^\nu(\phi_1)[g_\alpha^\nu(c_2)-g_\alpha^\nu(c_1)]\}v_1 \nonumber \\
	& \phantom{=\sum_{\alpha=T,\,H}\intl_{\Omega_\alpha(t)}\Biggl(}
		-\delta(\phi_2-\phi_1)v_1-\lambda_\alpha(h_\alpha(\phi_2)-h_\alpha(\phi_1))q_\alpha(c_2)v_2 \nonumber \\
	& \phantom{=\sum_{\alpha=T,\,H}\intl_{\Omega_\alpha(t)}\Biggl(}
		+\lambda_\alpha h_\alpha(\phi_1)(q_\alpha(c_2)-q_\alpha(c_1))v_2\Biggr)\,dx,
	\label{eq:proof-twosol}
\end{align}
whence, choosing the test functions $v_1=\P(\phi_2-\phi_1)$, $v_2=c_2-c_1$ and using Eq.~\ref{eq:action_of_P} (cf. Appendix~\ref{app:poisson}), we rewrite the left-hand side as
\begin{align}
	\text{l.h.s of (\ref{eq:proof-twosol})} &= \frac{1}{2}\frac{d}{dt}\left(\wnorm{(\phi_2-\phi_1)(t)}_0^2+
		\norm{(c_2-c_1)(t)}_0^2\right) \nonumber \\
	& \phantom{=}+\kappa_m\intl_\Omega(\Phi(\phi_2)-\Phi(\phi_1))(\phi_2-\phi_1)\,dx  \nonumber \\
	& \phantom{=}+D\norm{\nabla{(c_2-c_1)}(t)}_0^2+\eta\norm{(c_2-c_1)(t)}_{0,\partial_b\Omega}^2.
	\label{eq:proof-lhs}
\end{align}

As for the right-hand side, Cauchy-Schwartz's inequality for the standard inner product in $L^2(\Omega)$, along with the boundedness of $f_\alpha^\nu$, \dots, $q_\alpha$ in the ranges of $\phi,\,c$ (recall Theorem \ref{theo:boundedness}), allows us to bound it from above as
\begin{align*}
	\text{r.h.s of (\ref{eq:proof-twosol})} &\leq\sum_{\substack{\alpha=T,\,H \\ \nu=p,\,d}}
		\gamma_\alpha^\nu(\norm{g_\alpha^\nu}_\infty\norm{f_\alpha^\nu(\phi_2)-f_\alpha^\nu(\phi_1)}_0 \\
	& \phantom{\leq}+\norm{f_\alpha^\nu}_\infty\norm{g_\alpha^\nu(c_2)-g_\alpha^\nu(c_1)}_0)\norm{\P(\phi_2-\phi_1)(t)}_0 \\
	& \phantom{\leq} -\delta\wnorm{(\phi_2-\phi_1)(t)}_0^2 \\
	& \phantom{\leq} +\sum_{\alpha=T,\,H}\lambda_\alpha\norm{q_\alpha}_\infty
		\norm{h_\alpha(\phi_2)-h_\alpha(\phi_1)}_0\norm{(c_2-c_1)(t)}_0,
\end{align*}
where we have further used that $-\lambda_\alpha h_\alpha(\phi_1)(q_\alpha(c_2)-q_\alpha(c_1))(c_2-c_1)\leq 0$ a.e. in $\QT$ because $q_\alpha$ is nondecreasing (hypothesis (H7.2)). Since $f_\alpha^\nu$, $h_\alpha$ are $\Phi$-Lipschitz continuous, it results
\begin{equation*}
	\norm{f_\alpha^\nu(\phi_2)-f_\alpha^\nu(\phi_1)}_0^2,\,\norm{h_\alpha(\phi_2)-h_\alpha(\phi_1)}_0^2
		\lesssim\intl_\Omega\left(\Phi(\phi_2)-\Phi(\phi_1)\right)(\phi_2-\phi_1)\,dx,
\end{equation*}
Combing this with the Lipschitz continuity of the $g_\alpha^\nu$'s (hypothesis (H4)), Cauchy's inequality, and the fact that $\norm{\P\cdot}_0\lesssim\wnorm{\cdot}_0$ (cf. Appendix \ref{app:poisson}), after some algebraic manipulations we arrive at
\begin{align}
	\text{r.h.s. of (\ref{eq:proof-twosol})} &\lesssim
		\epsilon\intl_\Omega\left(\Phi(\phi_2)-\Phi(\phi_1)\right)(\phi_2-\phi_1)\,dx \nonumber \\
	& \phantom{\leq} +\left(\frac{1}{2}+\frac{1}{\epsilon}\right)\left(\wnorm{(\phi_2-\phi_1)(t)}_0^2+\norm{(c_2-c_1)(t)}_0^2\right)
	\label{eq:proof-rhs}
\end{align}
where $\epsilon>0$ is arbitrary. From Eqs.~\ref{eq:proof-lhs}, \ref{eq:proof-rhs} we deduce then that there exists $C>0$ such that
\begin{align*}
	\frac{1}{2}\frac{d}{dt} &\left(\wnorm{(\phi_2-\phi_1)(t)}_0^2+\norm{(c_2-c_1)(t)}_0^2\right) \\
	& +(\kappa_m-\epsilon C)\intl_\Omega\left(\Phi(\phi_2)-\Phi(\phi_1)\right)(\phi_2-\phi_1)\,dx
		+D\norm{\nabla{(c_2-c_1)}(t)}_0^2 \\
	& +\eta\norm{(c_2-c_1)(t)}_{0,\partial_b\Omega}^2\leq
		C\left(\frac{1}{2}+\frac{1}{\epsilon}\right)\left(\wnorm{(\phi_2-\phi_1)(t)}_0^2+\norm{(c_2-c_1)(t)}_0^2\right)
\end{align*}
for a.e. $t\in[0,\,\Tmax]$. Particularly, it is possible to choose $\epsilon$ so small that $\kappa_m-\epsilon C>0$. Multiplying both sides by $e^{-2C't}$, $C':=C(1/2+1/\epsilon)$, yields
\begin{align*}
	\frac{1}{2}\frac{d}{dt} &\left(e^{-2C't}\left(\wnorm{(\phi_2-\phi_1)(t)}_0^2+\norm{(c_2-c_1)(t)}_0^2\right)\right) \\
	& +e^{-2C't}(\kappa_m-\epsilon C)\intl_\Omega\left(\Phi(\phi_2)-\Phi(\phi_1)\right)(\phi_2-\phi_1)\,dx \\
	& +e^{-2C't}\left(D\norm{\nabla{(c_2-c_1)}(t)}_0^2+\eta\norm{(c_2-c_1)(t)}_{0,\partial_b\Omega}^2\right)\leq 0
\end{align*}
whence, integrating from $0$ to $t\leq\Tmax$ and considering that $e^{2C't}\leq e^{2C'\Tmax}$ for all $0\leq t\leq\Tmax$,
\begin{align*}
	\frac{1}{2}\bigl(\wnorm{(\phi_2-\phi_1)(t)}_0^2 &+ \norm{(c_2-c_1)(t)}_0^2\bigr) \\
	& +(\kappa_m-\epsilon C)\intl_0^t\intl_\Omega\left(\Phi(\phi_2)-\Phi(\phi_1)\right)(\phi_2-\phi_1)\,dx\,d\tau \\
	& +2D\intl_0^t\norm{\nabla{(c_2-c_1)}(\tau)}_0^2\,d\tau+2\eta\intl_0^t\norm{(c_2-c_1)(\tau)}_{0,\partial_b\Omega}^2\,d\tau \\
	& \leq \frac{1}{2}e^{2C'\Tmax}\left(\wnorm{\phi_{2,0}-\phi_{1,0}}_0^2+\norm{c_{2,0}-c_{1,0}}_0^2\right)
\end{align*}
for all $t\in[0,\,\Tmax]$. At this point it suffices to observe that each term at the left-hand side, being nonnegative, is singularly bounded from above by the right-hand side. Integrating the first twos on $[0,\,\Tmax]$ and evaluating the remaining ones for $t=\Tmax$ gives the thesis.
\end{proof}
	
\section{The stationary problem}
\label{sect:time.independent}
In this section we turn our attention to the stationary problem
\begin{equation}
	\left\{
	\begin{array}{rcll}
		-\kappa_m\Delta{\Phi(\phi)} & = & \Gamma(x,\,\phi,\,c) & \text{in\ } \Omega_T,\,\Omega_H \\[0.15cm]
		-D\Delta{c} & = & Q(x,\,\phi,\,c) & \text{in\ } \Omega \\[0.2cm]
		\begin{array}{rcl}
			\kappa_m\nabla\Phi(\phi)\cdot\n & = & 0, \\
			\phi & = & \phi^\star, \\
			\kappa_m\jump{\nabla{\Phi(\phi)}}\cdot\n & = & 0 \\
		\end{array} & &
		\begin{array}{rcl}
			-D\nabla{c}\cdot\n & = & \eta(c-c_b) \\
			c & = & c_b \\
			\\
		\end{array} & 
		\begin{array}{l}
			\text{on\ } \partial_b\Omega \\
 			\text{on\ } \partial_f\Omega \\
			\text{on\ } S \\
		\end{array}
	\end{array}
	\right.
	\label{eq:stat}
\end{equation}
which describes the equilibrium configurations of the model for large times. The asymptotic stability of constant steady states for a reaction-diffusion system sharing some analogies with Problem \ref{eq:stat} has been addressed in \cite{difrancesco2011asc}.

Heuristically, the solution of Problem~\ref{eq:stat} is what the solution of the time-dependent Problem~\ref{eq:parab} tends to for $t\to+\infty$. Making this limit rigorous with the appropriate concept of convergence is beyond the scope of this work, therefore we will be satisfied with the above intuitive interpretation.

In Problem~\ref{eq:stat} the state variables depend on space only: $\phi=\phi(x)$, $c=c(x)$. The sub-domains $\Omega_T,\,\Omega_H$ are fixed, their interface being $S$.

We assume that solutions exist to Problem \ref{eq:stat} in the following sense:
\begin{definition}[Weak solutions to the stationary problem] \label{def:weaksol.stat}
A weak solution to Problem~\ref{eq:stat} is a pair $(\phi,\,c)\in\V$ such that:
\begin{enumerate}
\item[(i)] $\Phi(\phi)\in H^1(\Omega)$
\item[(ii)] $\phi=\phith$, $c=c_b$ on $\partial_b\Omega$ in the trace sense
\end{enumerate}
which satisfies
\begin{multline}
	\intl_\Omega\left(\kappa_m\nabla{\Phi(\phi)}\cdot\nabla{v_1}+D\nabla{c}\cdot\nabla{v_2}\right)\,dx
		+\eta\intl_{\partial_b\Omega}(c-c_b)v_2\,d\sigma \\
	=\sum_{\alpha=T,\,H}\intl_{\Omega_\alpha}\left(\sum_{\nu=p,\,d}\gamma_\alpha^\nu f_\alpha^\nu(\phi)
		g_\alpha^\nu(c)v_1-\delta\phi v_1-\lambda_\alpha h_\alpha(\phi)q_\alpha(c)v_2\right)\,dx
	\label{eq:stat-weak}
\end{multline}
for all $v_1,\,v_2\in V_f$.
\end{definition}

\subsection{Nonnegativity and boundedness of the solution}
\begin{theorem}
Any weak solution $(\phi,\,c)\in\V$ to Problem~\ref{eq:stat} satisfies
\begin{equation*}
	0\leq\phi(x)\leq\phimax, \quad 0\leq c(x)\leq c_b \quad \text{for a.e.\ } x\in\Omega.
\end{equation*}
\label{theo:boundedness-stat}
\end{theorem}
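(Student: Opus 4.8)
The plan is to mimic the proof of Theorem~\ref{theo:boundedness} in the time-dependent case, with the simplification that no time derivative and no initial condition appear; what was previously a differential inequality in $t$ will collapse to an algebraic inequality obtained directly from a single choice of test functions. The structure is again twofold: first show $\phi^-=c^-=0$ a.e.\ in $\Omega$, then show $(\phi-\phimax)^+=(c-c_b)^+=0$ a.e.\ in $\Omega$.

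For the lower bound, I would test the weak formulation \ref{eq:stat-weak} with $v_1=\phi^-$, $v_2=c^-$, which are legitimate elements of $V_f$: by Stampacchia's theorem $\phi^-,c^-\in H^1(\Omega)$, and since $\phi=\phith>0$ and $c=c_b>0$ on $\partial_f\Omega$ in the trace sense, the traces of $\phi^-$ and $c^-$ vanish there. Using $\nabla\Phi(\phi)\cdot\nabla\phi^- = \Phi'(\phi)|\nabla\phi^-|^2\1_{\{\phi<0\}} = \Phi'(-\phi^-)|\nabla\phi^-|^2$ and $\nabla c\cdot\nabla c^- = -|\nabla c^-|^2$, the left-hand side becomes $-\kappa_m\int_\Omega\Phi'(-\phi^-)|\nabla\phi^-|^2\,dx - D\|\nabla c^-\|_0^2 - \eta\|c^-\|_{0,\partial_b\Omega}^2 - \eta c_b\int_{\partial_b\Omega}c^-\,d\sigma$, all terms of which are $\leq 0$ (the $\Phi'$ term by (H1)). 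On the right-hand side, arguing exactly as in Theorem~\ref{theo:boundedness}: the proliferation terms contribute $\gamma_\alpha^p f_\alpha^p(-\phi^-)g_\alpha^p(c)\phi^-$, which is $\geq 0$ because $f_\alpha^p\geq 0$ on $(-\infty,0)$ by (H3.1), $g_\alpha^p\geq 0$ by (H4), $\gamma_\alpha^p>0$ by (H2); the death terms contribute $\gamma_\alpha^d f_\alpha^d(-\phi^-)g_\alpha^d(c)\phi^-$, which is $\geq 0$ because $\gamma_\alpha^d<0$ and $f_\alpha^d\leq 0$ on $(-\infty,0)$ by (H3.2); the $-\delta\phi$ term gives $+\delta(\phi^-)^2\geq 0$; and $-\lambda_\alpha h_\alpha(\phi)q_\alpha(-c^-)c^-\geq 0$ because $h_\alpha\geq 0$ by (H6), $\lambda_\alpha>0$ by (H5), and $q_\alpha(-c^-)\leq 0$ by (H7.1). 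Hence the right-hand side is $\geq 0$ while the left-hand side is $\leq 0$, forcing each nonnegative term to vanish; in particular $\|\nabla c^-\|_0=0$, and since $c^-\in H^1_{0,\partial_f\Omega}(\Omega)$ with $c^-$ also of one sign on $\partial_b\Omega$, Poincar\'e's inequality (if $\partial_f\Omega\neq\emptyset$) gives $c^-=0$; otherwise $c^-$ is constant and the boundary term $\eta c_b\int_{\partial_b\Omega}c^-\,d\sigma=0$ with $c^-\geq 0$ forces $c^-=0$. Similarly $\int_\Omega\Phi'(-\phi^-)|\nabla\phi^-|^2\,dx=0$: since $\Phi'>0$ away from the origin and $\Phi'(0)=0$ only at a single point, this gives $\nabla\phi^-=0$ a.e., hence $\phi^-$ constant, and since $\phi^-=0$ on $\partial_f\Omega$ (or, if $\partial_f\Omega=\emptyset$, one needs a separate small argument — see below) we conclude $\phi^-=0$.

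For the upper bound I would set $\tilde\phi:=(\phi-\phimax)^+$, $\tilde c:=(c-c_b)^+$ and test with $v_1=\tilde\phi$, $v_2=\tilde c$, again admissible since $\phith<\phimax$ and hence the traces on $\partial_f\Omega$ vanish. The left-hand side becomes $\kappa_m\int_\Omega\Phi'(\phimax+\tilde\phi)|\nabla\tilde\phi|^2\,dx + D\|\nabla\tilde c\|_0^2 + \eta\|\tilde c\|_{0,\partial_b\Omega}^2 \geq 0$. The right-hand side is $\leq 0$: the proliferation term $\gamma_\alpha^p f_\alpha^p(\phimax+\tilde\phi)g_\alpha^p(c)\tilde\phi\leq 0$ since $f_\alpha^p\leq 0$ on $(\phimax,+\infty)$ by (H3.1); the death term $\gamma_\alpha^d f_\alpha^d(\phimax+\tilde\phi)g_\alpha^d(c)\tilde\phi\leq 0$ since $\gamma_\alpha^d<0$ and $f_\alpha^d\geq 0$ on $(\phimax,+\infty)$ by (H3.2); the $-\delta(\phimax+\tilde\phi)\tilde\phi\leq 0$ term; and $-\lambda_\alpha h_\alpha(\phi)q_\alpha(c_b+\tilde c)\tilde c\leq 0$ since $h_\alpha,q_\alpha\geq 0$ on the relevant ranges. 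Thus every term vanishes, and as before $\nabla\tilde c=0$, $\|\tilde c\|_{0,\partial_b\Omega}=0$, $\nabla\tilde\phi=0$, which combined with the vanishing traces on $\partial_f\Omega$ yields $\tilde\phi=\tilde c=0$.

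The main obstacle, and the only genuine difference from the time-dependent argument, is the absence of the $L^2$-time-integral machinery: here I cannot rely on an evolution inequality to absorb constants, so the conclusion must come purely from the sign structure and from a coercivity/rigidity argument turning ``gradient is zero'' into ``function is zero.'' When $\partial_f\Omega\neq\emptyset$ this is immediate via the vanishing Dirichlet trace (or Poincar\'e's inequality, which the paper has recorded for $H^1_{0,\B}(\Omega)$). The delicate case is $\partial_f\Omega=\emptyset$ (pure Robin/Neumann, the ``fully vascularized'' situation from Sect.~\ref{sect:bc}): then constants are not controlled by gradients, and one must instead exploit the remaining boundary terms and the genuinely zeroth-order terms $-\delta\phi$ and $-\lambda_\alpha h_\alpha q_\alpha$. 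For $c$ the term $\eta c_b\int_{\partial_b\Omega}c^-\,d\sigma$ (lower bound) and $\eta\|\tilde c\|_{0,\partial_b\Omega}^2$ (upper bound) do the job; for $\phi$ the constant value $\phi^-$ makes the identity read $\delta|\Omega|(\phi^-)^2 + (\text{sign-definite growth terms}) = 0$, hence $\phi^-=0$, and similarly $\delta(\phimax+\tilde\phi)\tilde\phi\cdot|\Omega| \leq 0$ forces $\tilde\phi=0$. I expect the write-up to handle the generic $\partial_f\Omega\neq\emptyset$ case and simply remark that the degenerate boundary case is covered by the same sign bookkeeping together with the $-\delta\phi$ term — consistent with Appendix~\ref{app:no.far}, which the paper advertises as extending the theory to boundary conditions without a far part.
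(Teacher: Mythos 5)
Your proposal is correct and follows essentially the same route as the paper: test the weak formulation with $v_1=\phi^-$, $v_2=c^-$ and then with $v_1=(\phi-\phimax)^+$, $v_2=(c-c_b)^+$, use (H1)--(H7) for the sign bookkeeping so that one side is nonnegative and the other nonpositive, and conclude from the vanishing of each sign-definite term via the dichotomy on $\Phi'(-\phi^-)$ (respectively the strict positivity of $\Phi'(\phimax+\tilde{\phi})$) together with Poincar\'e's inequality on $V_f$. Two minor remarks: your intermediate identity should read $\nabla{\Phi(\phi)}\cdot\nabla{\phi^-}=-\Phi'(-\phi^-)\abs{\nabla{\phi^-}}^2$ (the correct sign reappears in your assembled left-hand side, so nothing downstream is affected), and your extra treatment of the case $\partial_f\Omega=\emptyset$ is a sound bonus that the paper's own proof does not attempt, since it works under the standing assumption that the far boundary is nonempty.
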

\begin{proof}
In order to prove that $\phi,\,c\geq 0$ a.e. in $\Omega$ we choose $v_1=\phi^-$, $v_2=c^-$ in Eq.~\ref{eq:stat-weak} and, similarly to the proof of Theorem \ref{theo:boundedness}, we compute:
\begin{align*}
	& -\intl_\Omega\left(\kappa_m\Phi'(-\phi^-)\vert\nabla{\phi^-}\vert^2+
		D\vert\nabla{c^-}\vert^2\right)\,dx+\eta\intl_{\partial_b\Omega}(c-c_b)c^-\,d\sigma \\
	& =\!\!\!\sum_{\alpha=T,\,H}\intl_{\Omega_\alpha}\!\!\left(\sum_{\nu=p,\,d}\gamma_\alpha^\nu f_\alpha^\nu(-\phi^-)
		g_\alpha^\nu(c)\phi^{-}+\delta(\phi^-)^2
			-\lambda_\alpha h_\alpha(\phi)q_\alpha(c^-)c^-\right)\! dx.
\end{align*}
The right-hand side being nonnegative in view of hypotheses (H2)--(H7), it results
\begin{equation*}
	-\kappa_m\intl_\Omega\Phi'(-\phi^-)\abs{\nabla{\phi^-}}^2\,dx-D\norm{\nabla{c^-}}_0^2
		-\eta\norm{c^-}_{0,\partial_b\Omega}^2-\eta c_b\intl_{\partial_b\Omega}c^-\,d\sigma\geq 0,
\end{equation*}
whence, considering that each term at the left-hand side is nonpositive, we deduce immediately $c^-=0$ a.e. in $\Omega$, and obtain furthermore $\Phi'(-\phi^-)\abs{\nabla{\phi^-}}^2=0$ a.e. in $\Omega$. This means either $\Phi'(-\phi^-)=0$, which implies $\phi^-=0$ a.e. in $\Omega$ because $\Phi'$ vanishes at most in zero, or $\vert\nabla{\phi^-}\vert^2=0$, which yields $\phi^-=0$ a.e. in $\Omega$ as well due to $\phi^-\in V_f$ (use Poincar\'e's inequality).

\bigskip

Next we prove that $\phi\leq\phimax$, $c\leq c_b$ a.e. in $\Omega$. Set $\tilde{\phi}:=(\phi-\phimax)^+$, $\tilde{c}:=(c-c_b)^+$ and choose $v_1=\tilde{\phi}$, $v_2=\tilde{c}$ as test functions in Eq.~\ref{eq:stat-weak} to find
\begin{align*}
	& \kappa_m\intl_\Omega\Phi'(\phimax+\tilde{\phi})\abs{\nabla{\tilde{\phi}}}^2\,dx
		+D\norm{\nabla{\tilde{c}}}_0^2+\eta\norm{\tilde{c}}_{0,\partial_b\Omega}^2 \\
	& =\sum_{\alpha=T,\,H}\intl_{\Omega_\alpha}\Biggl(\sum_{\nu=p,\,d}\gamma_\alpha^\nu
		f_\alpha^\nu(\phimax+\tilde{\phi})g_\alpha^\nu(c)\tilde{\phi} \\
	& \phantom{=\sum_{\alpha=T,\,H}\intl_{\Omega_\alpha}\Biggl(}
		-\delta(\phimax+\tilde{\phi})\tilde{\phi}-\lambda_\alpha h_\alpha(\phi)q_\alpha(c_b+\tilde{c})\tilde{c}\Biggr)\,dx.
\end{align*}
Arguing like in Theorem \ref{theo:boundedness}, we conclude that the right-hand side of the above equation is nonpositive, whereas each term at the left-hand side is nonnegative. Therefore we have $\tilde{c}=0$ and also $\Phi'(\phimax+\tilde{\phi})\abs{\nabla{\tilde{\phi}}}^2=0$ a.e. in $\Omega$. Since $\Phi'(\phimax+\tilde{\phi})>0$, this implies $\abs{\nabla{\tilde{\phi}}}^2=0$ a.e. in $\Omega$, i.e., $\tilde{\phi}=0$ a.e. in $\Omega$ as well thanks to $\tilde{\phi}\in V_f$.
\end{proof}

\subsection{Uniqueness of the solution}
\begin{theorem}
There exists a constant $C>0$, depending only on the coefficients $\gamma_\alpha^\nu$, $\lambda_\alpha$ and on the functions $f_\alpha^\nu$, $g_\alpha^\nu$, $h_\alpha$, $q_\alpha$, such that if $C$ is sufficiently small then Problem~\ref{eq:stat} admits at most one weak solution $(\phi,\,c)\in\V$.
\label{theo:uniqueness-stat}
\end{theorem}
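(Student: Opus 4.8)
The plan is to run the same energy argument as in the proof of Theorem~\ref{theo:uniqueness}, the crucial difference being that here there is no time derivative, hence no Gronwall step: every reaction contribution must instead be absorbed into the coercive (diffusive) terms, and it is exactly this absorption that forces $C$ to be small. Let $(\phi_i,\,c_i)\in\V$, $i=1,\,2$, be two weak solutions of Problem~\ref{eq:stat}. By Theorem~\ref{theo:boundedness-stat} both satisfy $0\leq\phi_i\leq\phimax$ and $0\leq c_i\leq c_b$ a.e. in $\Omega$, so all the nonlinearities are evaluated in the ranges where they are bounded. First I would subtract the two identities~\ref{eq:stat-weak}, obtaining for all $v_1,\,v_2\in V_f$ the stationary analogue of Eq.~\ref{eq:proof-twosol} (simply discard the two duality-pairing terms), and then test it with $v_1=\P(\phi_2-\phi_1)$ and $v_2=c_2-c_1$, both of which lie in $V_f$. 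Using the action of $\P$ (Eq.~\ref{eq:action_of_P}), the left-hand side becomes
\[
	\kappa_m\intl_\Omega(\Phi(\phi_2)-\Phi(\phi_1))(\phi_2-\phi_1)\,dx+D\norm{\nabla(c_2-c_1)}_0^2+\eta\norm{(c_2-c_1)}_{0,\partial_b\Omega}^2,
\]
while the natural-death term $-\delta\phi$ contributes $-\delta\wnorm{\phi_2-\phi_1}_0^2$ on the right, which I keep explicit and move to the left.

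For the remaining terms on the right I would repeat the estimates of Theorem~\ref{theo:uniqueness} essentially verbatim: Cauchy-Schwartz together with the boundedness of $f_\alpha^\nu$, $g_\alpha^\nu$, $q_\alpha$ on the a priori ranges; the $\Phi$-Lipschitz continuity of $f_\alpha^\nu$ and $h_\alpha$ (hypotheses (H3), (H6)) to bound $\norm{f_\alpha^\nu(\phi_2)-f_\alpha^\nu(\phi_1)}_0^2$ and $\norm{h_\alpha(\phi_2)-h_\alpha(\phi_1)}_0^2$ by a multiple of $\int_\Omega(\Phi(\phi_2)-\Phi(\phi_1))(\phi_2-\phi_1)\,dx$; the Lipschitz continuity of $g_\alpha^\nu$ (hypothesis (H4)); the monotonicity of $q_\alpha$ (hypothesis (H7.2)) to discard the sign-definite term $-\lambda_\alpha h_\alpha(\phi_1)(q_\alpha(c_2)-q_\alpha(c_1))(c_2-c_1)\leq 0$; Cauchy's inequality with a free parameter $\epsilon>0$; and $\norm{\P\cdot}_0\lesssim\wnorm{\cdot}_0$. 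This yields a bound of the form $\text{r.h.s.}\leq\epsilon C\int_\Omega(\Phi(\phi_2)-\Phi(\phi_1))(\phi_2-\phi_1)\,dx+\frac{C}{\epsilon}\bigl(\wnorm{\phi_2-\phi_1}_0^2+\norm{c_2-c_1}_0^2\bigr)-\delta\wnorm{\phi_2-\phi_1}_0^2$, with $C$ depending only on the reaction data (the coefficients $\gamma_\alpha^\nu$, $\lambda_\alpha$ and the bounds and Lipschitz / $\Phi$-Lipschitz constants of $f_\alpha^\nu$, $g_\alpha^\nu$, $h_\alpha$, $q_\alpha$), exactly as required by the statement. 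Finally, since $c_2-c_1\in V_f$, Poincar\'e's inequality (or, should $\partial_f\Omega$ be empty, the boundary term $\eta\norm{(c_2-c_1)}_{0,\partial_b\Omega}^2$) lets me replace $\norm{c_2-c_1}_0^2$ by a multiple of $\norm{\nabla(c_2-c_1)}_0^2$. Collecting all contributions I arrive at an inequality of the shape
\begin{multline*}
	(\kappa_m-\epsilon C)\intl_\Omega(\Phi(\phi_2)-\Phi(\phi_1))(\phi_2-\phi_1)\,dx+\Bigl(\delta-\tfrac{C}{\epsilon}\Bigr)\wnorm{\phi_2-\phi_1}_0^2 \\
	+\Bigl(D-\tfrac{C C_P^2}{\epsilon}\Bigr)\norm{\nabla(c_2-c_1)}_0^2+\eta\norm{(c_2-c_1)}_{0,\partial_b\Omega}^2\leq 0.
\end{multline*}

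The hard part is the choice of $\epsilon$. I need $\epsilon<\kappa_m/C$ together with $\epsilon>C/\delta$ and $\epsilon>C C_P^2/D$ simultaneously, and such an $\epsilon$ exists precisely when $C\,\max\{1/\delta,\,C_P^2/D\}<\kappa_m/C$, i.e. when $C$ is sufficiently small — which is exactly the hypothesis of the theorem. For such an $\epsilon$ all four summands on the left are nonnegative and their sum is $\leq 0$, so each one vanishes; in particular $\wnorm{\phi_2-\phi_1}_0=0$ (hence $\phi_1=\phi_2$ a.e. in $\Omega$, since $\wnorm{\cdot}_0$ is a norm — equivalently, $\int_\Omega(\Phi(\phi_2)-\Phi(\phi_1))(\phi_2-\phi_1)\,dx=0$ forces $\phi_1=\phi_2$ a.e. by the strict monotonicity of $\Phi$ in (H1)), while $\norm{\nabla(c_2-c_1)}_0=0$ together with $c_2-c_1\in V_f$ gives $c_1=c_2$ a.e. I expect the only genuinely delicate point to be the bookkeeping of $C$ and $\epsilon$: one must make the $\epsilon$-dependence of the reaction bound explicit enough to exhibit a single admissible $\epsilon$ under the smallness assumption, everything else being a routine transcription of the computations already carried out for the time-dependent problem.
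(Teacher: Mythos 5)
Your proposal is correct and follows essentially the same route as the paper: subtract the two weak formulations, test with $v_1=\P(\phi_2-\phi_1)$ and $v_2=c_2-c_1$, reuse the estimates from the time-dependent uniqueness proof (boundedness from Theorem~\ref{theo:boundedness-stat}, $\Phi$-Lipschitz continuity, monotonicity of $q_\alpha$, Poincar\'e on $c_2-c_1$), and conclude from the nonnegativity of each coefficient. The only cosmetic difference is that the paper fixes $\epsilon=1$ in Cauchy's inequality, yielding the condition $C<\min\{\kappa_m,\,\delta,\,D/C_P^2\}$ directly, whereas you keep $\epsilon$ free and derive an equivalent smallness condition on $C$ from the compatibility of the three constraints on $\epsilon$.
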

\begin{proof}
Let $(\phi_i,\,c_i)\in\V$, $i=1,\,2$, be two solutions, then for all $v_1,\,v_2\in V_f$ the difference $(\phi_2-\phi_1,\,c_2-c_1)$ satisfies
\begin{align*}
	& \intl_\Omega\left(\kappa_m\nabla{(\Phi(\phi_2)-\Phi(\phi_1))}
		\cdot\nabla{v_1}+D\nabla{(c_2-c_1)}\cdot\nabla{v_2}\right)\,dx \\
	& \phantom{\intl_\Omega\left(\right.} +\eta\intl_{\partial_b\Omega}(c_2-c_1)v_2\,d\sigma \\
	& =\!\!\!\sum_{\alpha=T,\,H}\intl_{\Omega_\alpha}\!\Biggl(\sum_{\nu=p,\,d}\gamma_\alpha^\nu\left\{
		\left[f_\alpha^\nu(\phi_2)-f_\alpha^\nu(\phi_1)\right]g_\alpha^\nu(c_2)
			+f_\alpha^\nu(\phi_1)\left[g_\alpha^\nu(c_2)-g_\alpha^\nu(c_1)\right]\right\}v_1 \\
	& \phantom{=\sum_{\alpha=T,\,H}\intl_{\Omega_\alpha}\Biggl(}
		-\delta(\phi_2-\phi_1)v_1-\lambda_\alpha(h_\alpha(\phi_2)-h_\alpha(\phi_1))q_\alpha(c_2) \\
	& \phantom{=\sum_{\alpha=T,\,H}\intl_{\Omega_\alpha}\Biggl(}
		+\lambda_\alpha h_\alpha(\phi_1)(q_\alpha(c_2)-q_\alpha(c_1))v_2\Biggr)\,dx.
\end{align*}
We choose $v_1=\P(\phi_2-\phi_1)$ (cf. Appendix \ref{app:poisson}), $v_2=c_2-c_1$, and mimic the computations of the proof of Theorem \ref{theo:uniqueness} (using in particular Cauchy's inequality at the right-hand side with $\epsilon=1$) to deduce
\begin{align*}
	\kappa_m &\intl_\Omega(\Phi(\phi_2)-\Phi(\phi_1))(\phi_2-\phi_1)\,dx+D\norm{\nabla{(c_2-c_1)}}_0^2 \\
	& +\eta\norm{c_2-c_1}_{0,\partial_b\Omega}^2+\delta\wnorm{\phi_2-\phi_1}_0^2 \\
	& \lesssim \intl_\Omega\left(\Phi(\phi_2)-\Phi(\phi_1)\right)(\phi_2-\phi_1)\,dx 	
		+\wnorm{\phi_2-\phi_1}_0^2+\norm{c_2-c_1}_0^2.
\end{align*}
Hence there exists $C>0$ such that
\begin{align*}
	& (\kappa_m-C)\intl_\Omega\left(\Phi(\phi_2)-\Phi(\phi_1)\right)(\phi_2-\phi_1)\,dx
		+(\delta-C)\wnorm{\phi_2-\phi_1}_0^2 \\
	& \phantom{(\kappa} +\frac{D-CC_P^2}{1+C_P^2}\norm{c_2-c_1}_1^2+\eta\norm{c_2-c_1}_{0,\partial_b\Omega}^2\leq 0,
\end{align*}
where we further applied Poincar\'e's inequality to $c_2-c_1\in V_f$. Uniqueness of the solution follows from this relationship as long as the coefficients of all terms are positive, which entails
\begin{equation*}
	C<\min\left\{\kappa_m,\,\delta,\,\frac{D}{C_P^2}\right\}. \qedhere
\end{equation*}
\end{proof}

\begin{remark}
\label{rem:C-uniqueness-stat}
For the sake of definiteness, we record that a possible constant $C$ for Theorem \ref{theo:uniqueness-stat} is
\begin{align*}
	C=\frac{1}{2}\max\Biggl\{ &\sum_{\alpha=T,\,H}\left(\sum_{\nu=p,\,d}\gamma_\alpha^\nu\LipPhi{f_\alpha^\nu}\norm{g_\alpha^\nu}_\infty
		+\lambda_\alpha\LipPhi{h_\alpha}\norm{q_\alpha}_\infty\right), \\
	& \sum_{\alpha=T,\,H}\left(\sum_{\nu=p,\,d}\gamma_\alpha^\nu\norm{f_\alpha^\nu}_\infty\operatorname{Lip}^2(g_\alpha^\nu)
		+\lambda_\alpha\norm{q_\alpha}_\infty\right), \\
	& C_P^2\sum_{\substack{\alpha=T,\,H \\ \nu=p,\,d}}\gamma_\alpha^\nu\left(\norm{f_\alpha^\nu}_\infty
			+\norm{g_\alpha^\nu}_\infty\right)\Biggr\}.
\end{align*}
\end{remark}

\subsection{Existence of the solution}
We complete our analysis of model \ref{eq:stat} by outlining the theory of the existence of solutions. We confine ourselves to the one-dimensional setting, taking as reference domain the dimensionless interval $I=(0,\,1)$. In particular, $x=0$ will be the vascular boundary and $x=1$ the far boundary.

The case $d=1$ allows us to rely on two basic tools, which are not available in higher dimensions: on the one hand the Sobolev embedding $C^0(\bar{I})\subset H^1(I)$, on the other hand Morrey's inequality $\norm{u}_\infty\leq\norm{u}_1$ for $u\in H^1(I)$. Extending the theory to the case $d>1$ is likely to require partly different tools, which is at present beyond the scope of the work.

The one-dimensional problem is written as
\begin{equation}
	\left\{
	\begin{array}{rcll}
		-\kappa_m\Phi(\phi)_{xx} & = & \Gamma(x,\,\phi,\,c) & \text{in\ } (0,\,S),\,(S,\,1) \\[0.15cm]
		-Dc_{xx} & = & Q(x,\,\phi,\,c) & \text{in\ } I \\[0.2cm]
		\begin{array}{rcl}
			\kappa_m\Phi(\phi)_x(0) & = & 0, \\
			\phi(1) & = & \phith, \\
			\kappa_m\jump{\Phi(\phi)_x} & = & 0 \\
		\end{array} & &
		\begin{array}{rcl}
			-Dc_x(0) & = & \eta(c(0)-c_b) \\
			c(1) & = & c_b \\
			\\
		\end{array} & 
	\end{array}
	\right.
	\label{eq:stat-1d}
\end{equation}
where $S\in\bar{I}$ is the location of the point interface between tumor and host cells. In particular, $\Omega_T=(0,\,S)$ and $\Omega_H=(S,\,1)$. By adapting Definition \ref{def:weaksol.stat} to the present context, a weak solution to Problem \ref{eq:stat-1d} is a pair $(\phi,\,c)\in\V$, such that $\Phi(\phi)\in H^1(I)$, $\phi(1)=\phith$, $c(1)=c_b$, which satisfies
\begin{multline}
	\intl_0^1\left(\kappa_m\Phi(\phi)_x v_{1x}+Dc_x v_{2x}\right)\,dx+\eta\left(c(0)-c_b\right)v_2(0) \\
	=\sum_{\alpha=T,\,H}\intl_{\Omega_\alpha}\left(\sum_{\nu=p,\,d}\gamma_\alpha^\nu f_\alpha^\nu(\phi)
		g_\alpha^\nu(c)v_1-\delta\phi v_1-\lambda_\alpha h_\alpha(\phi)q_\alpha(c)v_2\right)\,dx
	\label{eq:stat-1d-weak}
\end{multline}
for all $v_1,\,v_2\in V_f$.

For the subsequent theory, it is useful to introduce the inverse $\Phi^{-1}$ of the constitutive function. Owing to hypothesis (H1), $\Phi^{-1}$ is continuous and strictly increasing, with $\Phi^{-1}(0)=0$, $\Phi^{-1}(s)<0$ for $s<0$, and $\Phi^{-1}(s)>0$ for $s>0$. It is also smooth on $(-\infty,\,0)$ and $(0,\,+\infty)$, however $\lim_{s\to 0}(\Phi^{-1})'(s)=+\infty$ because of the degeneracy of $\Phi'$ at the origin.

In order to prove the existence of stationary solutions we resort to a splitting method, which consists in approaching the two differential equations of Problem~\ref{eq:stat-1d} separately, assuming that the main unknown is either $\phi$ or $c$ and that the other function is known.

\begin{theorem}
Assume $h_\alpha(0)=0$. There exists a constant $C>0$, depending only on the parameters $\kappa_m$, $\gamma_\alpha^\nu$, $\delta$, $\phimax$ and on the functions $\Phi$, $f_\alpha^\nu$, $g_\alpha^\nu$, such that if $C$ is sufficiently small then Problem~\ref{eq:stat-1d} admits a weak solution $(\phi,\,c)\in\V$.
\label{theo:existence-stat}
\end{theorem}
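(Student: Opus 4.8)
The plan is to construct the solution by a fixed-point argument built on the splitting method announced just before the statement. Given a guessed nutrient profile $c\in C^0(\bar I)$ with $0\le c\le c_b$, the first subproblem is the elliptic equation $-\kappa_m\Phi(\phi)_{xx}=\Gamma(x,\phi,c)$ with the mixed Neumann/Dirichlet/interface data of Problem~\ref{eq:stat-1d}; the natural move is to pass to the variable $w:=\Phi(\phi)$, so that $\phi=\Phi^{-1}(w)$ and the equation becomes $-\kappa_m w_{xx}=\Gamma(x,\Phi^{-1}(w),c)$, which is a semilinear equation for $w$ with a monotone-type lower-order term (the $-\delta\phi=-\delta\Phi^{-1}(w)$ piece is monotone in $w$ since $\Phi^{-1}$ is increasing). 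Because the $f_\alpha^\nu$ are bounded and the $g_\alpha^\nu$ are bounded on $[0,c_b]$, standard monotone-operator / Lax--Milgram-plus-truncation arguments in $V_f=H^1_{0,\{1\}}(I)$ give existence of a weak $w$, hence of $\phi$; Theorem~\ref{theo:boundedness-stat}-style test-function computations (test with $\phi^-$ and with $(\phi-\phimax)^+$, exactly as in the boundedness proofs already in the paper, here using $h_\alpha(0)=0$ is not yet needed but $f_\alpha^p(0)=f_\alpha^p(\phimax)=0$, $f_\alpha^d(0)=0$ from (H3.1)--(H3.2) are) pin down $0\le\phi\le\phimax$. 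Uniqueness of this $\phi$ given $c$ follows from the $\Phi$-Lipschitz bound on $f_\alpha^\nu$ together with the monotonicity of $-\delta\Phi^{-1}(\cdot)$, provided the relevant constant is small — this is where the smallness hypothesis on $C$ enters.

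The second subproblem, given $\phi\in C^0(\bar I)$ with $0\le\phi\le\phimax$, is the linear elliptic equation $-Dc_{xx}=-\sum_\alpha\lambda_\alpha h_\alpha(\phi)q_\alpha(c)\1_{\Omega_\alpha}$ with Robin data at $x=0$ and Dirichlet data $c(1)=c_b$. Writing $c=c_b+\hat c$ with $\hat c\in V_f$, the bilinear form $\int_0^1 D\hat c_x v_x\,dx+\eta\,\hat c(0)v(0)$ is coercive on $V_f$ (Poincaré plus the nonnegative boundary term), and the reaction term, being nonpositive and monotone nonincreasing in $c$ on $[0,\infty)$ by (H7.2) with $h_\alpha(\phi)\ge0$, fits the monotone framework; hence a unique weak $c$ exists, and testing with $c^-$ and $(c-c_b)^+$ as in Theorem~\ref{theo:boundedness-stat} gives $0\le c\le c_b$. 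Crucially, elliptic regularity in one dimension (or directly the weak formulation) yields $c\in H^1(I)\hookrightarrow C^0(\bar I)$ with $\norm{c}_\infty\le\norm{c}_1$ by Morrey, and the a~priori bound $0\le c\le c_b$ makes the solution operator $\phi\mapsto c$ take the closed convex set $K:=\{u\in C^0(\bar I):0\le u\le c_b\}$ into itself. Likewise $c\mapsto\phi$ maps $K$-type data into $\{0\le\phi\le\phimax\}\subset C^0(\bar I)$ via $\Phi^{-1}\in C^0$ applied to $w\in H^1\hookrightarrow C^0$; here the hypothesis $h_\alpha(0)=0$ guarantees that the coupling is well-behaved when $\phi$ touches $0$, and together with $f_\alpha^\nu$ vanishing at the endpoints it keeps the right-hand side of the $\phi$-equation controlled. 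Composing, we obtain a map $\mathcal F:K\to K$, $c\mapsto\phi[c]\mapsto c[\phi[c]]$.

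The existence conclusion then follows from Schauder's fixed-point theorem: $K$ is a closed, bounded, convex subset of the Banach space $C^0(\bar I)$, and $\mathcal F$ is continuous and compact because each solution step lands in a bounded set of $H^1(I)$, which embeds compactly into $C^0(\bar I)$ in dimension one (this compact embedding is exactly the $d=1$ tool highlighted before the theorem). Continuity of $\mathcal F$ is checked by the same difference estimates used in Theorems~\ref{theo:uniqueness} and~\ref{theo:uniqueness-stat}: if $c_n\to c$ in $C^0$, the corresponding $\phi_n$ converge in $H^1$ (hence in $C^0$) using the $\Phi$-Lipschitz continuity of $f_\alpha^\nu$ and the uniform bounds, and then $c[\phi_n]\to c[\phi]$ by the linear theory. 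A fixed point $c=\mathcal F(c)$ together with its associated $\phi$ is then a weak solution of Problem~\ref{eq:stat-1d} in the sense of~\eqref{eq:stat-1d-weak}.

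The main obstacle is the degeneracy of $\Phi'$ at the origin, equivalently $(\Phi^{-1})'(s)\to+\infty$ as $s\to0$: this is why one cannot simply invert $\Phi$ and run a smooth nonlinear elliptic theory, and why the natural unknown for the first subproblem is $w=\Phi(\phi)$ rather than $\phi$. One must check that the composed nonlinearity $w\mapsto\Gamma(x,\Phi^{-1}(w),c)$ is still, say, bounded and Carathéodory on the relevant range (which it is, since $\Phi^{-1}$ is continuous and $f_\alpha^\nu$ bounded with the endpoint vanishing conditions), and that the a~priori $L^\infty$ bounds survive the passage through $\Phi^{-1}$ — both are manageable but require care exactly at $\phi=0$, which is the point of assuming $h_\alpha(0)=0$ so that the nutrient consumption term degenerates consistently. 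The smallness of $C$ is used only to make the $\phi$-solution operator a well-defined single-valued (and continuous) map, i.e. to get uniqueness in the first subproblem; with the bounds above, the self-map and compactness structure needed for Schauder is otherwise unconditional.
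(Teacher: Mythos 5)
Your overall architecture (split into a $\phi$-subproblem and a $c$-subproblem, establish the $L^\infty$ bounds by the tests of Theorem \ref{theo:boundedness-stat}, close with Schauder) matches the paper's, which works with the sets $\VV,\,\UU\subset L^2(I)$ and constructs each sub-solution by minimizing the coercive, weakly lower semicontinuous functionals $J_1,\,J_2$ rather than by monotone-operator arguments. But there is a genuine gap at the step you dispatch with ``the corresponding $\phi_n$ converge in $H^1$ (hence in $C^0$).'' The difference estimate for two solutions $\phi_1,\,\phi_2$ of the $\phi$-subproblem naturally controls only $\int_0^1(\Phi(\phi_2)-\Phi(\phi_1))(\phi_2-\phi_1)\,dx$ and $\wnorm{\phi_2-\phi_1}_0$; because $\Phi'$ degenerates at the origin, these quantities do \emph{not} control $\norm{\phi_2-\phi_1}_0$, let alone $\norm{\phi_2-\phi_1}_1$ or $\norm{\phi_2-\phi_1}_\infty$, if the solutions are allowed to touch $0$. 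The paper closes this hole by a step you omit entirely: testing with $v_1=\Phi(\phi)-\Phi(\phith)$ to obtain $\norm{\Phi(\phi)-\Phi(\phith)}_\infty\leq C_1$ and imposing the smallness constraint $C_1\leq\Phi(\phith)-\Phi(\epsilon)$, which forces $\phi\geq\epsilon>0$ on $\bar{I}$; only then does $\abs{\Phi(\phi_2)-\Phi(\phi_1)}\geq\bigl(\min_{s\in[\epsilon,\,\phimax]}\Phi'(s)\bigr)\abs{\phi_2-\phi_1}$ convert the degenerate estimate into genuine $L^2$ continuity of the solution map $\S_2$. Without this, your map $\mathcal{F}$ is not known to be continuous in the topology where you claim compactness, and Schauder does not apply.

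Two further misstatements. First, you attribute $h_\alpha(0)=0$ to taming the coupling where $\phi$ touches $0$; in the paper it plays a different, precise role: it gives $\S_1(0)=c_b$, whence $\norm{\S_1(\varphi)-c_b}_1\lesssim\norm{\varphi}_0$, and this uniform $H^1$ bound is what makes the nutrient operator $\S_1$ \emph{compact} via Rellich. Your proposal does not otherwise supply a $\varphi$-independent $H^1$ bound on $c[\varphi]$, so your compactness claim is also incomplete. Second, the reaction terms $\gamma_\alpha^\nu f_\alpha^\nu(\Phi^{-1}(w))g_\alpha^\nu(c)$ are merely bounded, not monotone in $w$, so ``standard monotone-operator arguments'' do not apply verbatim to the $\phi$-subproblem; the paper instead minimizes $J_2$, using that the truncated antiderivative $\tilde{F}_\alpha^p$ is bounded above by $\tilde{F}_\alpha^p(\Phi(\phimax))$ to get coercivity.
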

\begin{proof}
We preliminarily define the sets
\begin{align*}
	\VV &:= \{f\in L^2(I)\,:\,0\leq f\leq\phimax\ \text{a.e. in\ } I\} \\
	\UU &:= \{f\in L^2(I)\,:\,0\leq f\leq c_b\ \text{a.e. in\ } I\}.
\end{align*}

\bigskip

Let us begin by considering the problem
\begin{equation}
	\left\{
	\begin{array}{rcll}
		-Dc_{xx} & = & Q(x,\,\varphi,\,c) & \text{in\ } I \\
		Dc_x(0) & = & \eta(c(0)-c_b) \\
		c(L) & = & c_b,
	\end{array}
	\right.
	\label{eq:stat-c}
\end{equation}
where $\varphi\in\VV$ is given. We associate with it an auxiliary problem in which the function $q_\alpha$ is replaced by $\tilde{q}_\alpha=q_\alpha\1_{[0,\,+\infty)}$. The corresponding weak formulation is obtained from Eq.~\ref{eq:stat-1d-weak} by letting $v_1=0$ and writing $\tilde{q}_\alpha$ in place of $q_\alpha$: find $c\in H^1(I)$, with $c(1)=c_b$, such that
\begin{equation}
	D\intl_0^1 c_x v_x\,dx+\eta(c(0)-c_b)v(0)=
		-\sum_{\alpha=T,\,H}\intl_{\Omega_\alpha}\lambda_\alpha h_\alpha(\varphi)\tilde{q}_\alpha(c)v\,dx
	\label{eq:stat-c-weak}
\end{equation}
for all $v\in V_f$. By introducing the antiderivative $\tilde{Q}_\alpha$ of $\tilde{q}_\alpha$ vanishing in zero, we can view Eq.~\ref{eq:stat-c-weak} as the Euler-Lagrange equation for the functional
\begin{equation*}
	J_1(c)=\frac{D}{2}\intl_0^1 c_x^2\,dx+\frac{\eta}{2}\left(c(0)-c_b\right)^2+
		\sum_{\alpha=T,\,H}\intl_{\Omega_\alpha}\lambda_\alpha h_\alpha(\varphi)\tilde{Q}_\alpha(c)\,dx
\end{equation*}
over the class of admissible functions $\A_1=\{c\in H^1(I):c(1)=c_b\}$. Thus we can seek our solution $c$ as a minimizing point of $J_1$ on $\A_1$.
	
Since $\tilde{q}_\alpha(s)=0$ for $s<0$ and $\tilde{q}_\alpha(s)=q_\alpha(s)\geq 0$ for $s\geq 0$, we have $\tilde{Q}_\alpha(s)\geq 0$ for all $s\in\R$. Using further that $\lambda_\alpha$ and $h_\alpha$ are nonnegative we obtain $J_1(c)\geq D/2\norm{c_x}_0^2$, which implies that $J_1$ is coercive. Therefore any minimizing sequence $\{c_k\}_{k=1}^{\infty}\subseteq\A_1$ is bounded in $H^1(I)$ and, upon passing to a subsequence, we can assume that it converges weakly to some $\bar{c}\in H^1(0,\,L)$. But $c_k-c_b\in V_f$ all $k$ and $V_f$ is a weakly closed subspace of $H^1(I)$ (in view of Mazur's Theorem, as it is closed), thus we deduce more precisely $\bar{c}-c_b\in V_f$, i.e., $\bar{c}(1)=c_b$ and ultimately $\bar{c}\in\A_1$.

Considering that $J_1$ is of the form $\int_0^1\Lag_1(c_x,\,c,\,x)\,dx$ for the Lagrangian
\begin{equation*}
	\Lag_1(p,\,z,\,x)=\frac{D}{2}p^2-\eta(z-c_b)p+
		\sum_{\alpha=T,\,H}\lambda_\alpha(x)h_\alpha(\varphi(x))\tilde{Q}_\alpha(z)\1_{\Omega_\alpha}(x),
\end{equation*}
which is smooth and convex in $p$ for all $z\in\R$ and all $x\in I$, we deduce that $J_1$ is sequentially weakly lower semicontinuous on $H^1(I)$. Thus $\bar{c}$ is a minimizing point of $J_1$, i.e., a solution to our auxiliary problem.

Mimicking the computations of the proof of Theorem \ref{theo:boundedness-stat} with $v_1=0$ reveals that, for any fixed $\varphi\in\VV$, all solutions to the auxiliary problem range in $[0,\,c_b]$. Hence we conclude $0\leq\bar{c}(x)\leq c_b$ for all $x\in\bar{I}$, and consequently that $\bar{c}$ solves also Problem~\ref{eq:stat-c} because the latter and the auxiliary problem coincide for $c\in[0,\,c_b]$. Notice that $\bar{c}\in\UU$.

We show now that the solutions to Problem~\ref{eq:stat-c} depend continuously on $\varphi$. Let $c_1,\,c_2$ be two solutions corresponding to $\varphi_1,\,\varphi_2\in\VV$, respectively, then for all $v\in V_f$ the difference $c_2-c_1$ satisfies
\begin{multline*}
	D\intl_0^1(c_2-c_1)_x v_x\,dx+\eta\left(c_2(0)-c_1(0)\right)v(0) \\
	=-\sum_{\alpha=T,\,H}\intl_{\Omega_\alpha}\lambda_\alpha\{(h_\alpha(\varphi_2)-h_\alpha(\varphi_1))q_\alpha(c_2)+
		h_\alpha(\varphi_1)(q_\alpha(c_2)-q_\alpha(c_1))\}v\,dx.
\end{multline*}			
We choose $v=c_2-c_1$ and observe that $-\lambda_\alpha h_\alpha(\varphi_1)(q_\alpha(c_2)-q_\alpha(c_1))(c_2-c_1)\leq 0$ in $\Omega_\alpha$ (hypothesis (H7.2)), whence
\begin{align*}
	D\norm{(c_2-c_1)_x}_0^2 &+ \eta\left(c_2(0)-c_1(0)\right)^2 \\
	& \leq\sum_{\alpha=T,\,H}\intl_{\Omega_\alpha}\lambda_\alpha q_\alpha(c_2)
		\abs{h_\alpha(\varphi_2)-h_\alpha(\varphi_1)}\cdot\abs{c_2-c_1}\,dx \\
	& \leq\frac{1}{2}\sum_{\alpha=T,\,H}\lambda_\alpha\|q_\alpha\|_\infty
		\left(\frac{1}{\epsilon}\norm{h_\alpha(\varphi_2)-h_\alpha(\varphi_1)}_0^2+\epsilon\norm{c_2-c_1}_0^2\right).
\end{align*}
Now we recall, from hypothesis (H6), that $h_\alpha$ is $\Phi$-Lipschitz continuous in $[0,\,\phi_\ast]$, whence
\begin{align*}
	\norm{h_\alpha(\varphi_2)-h_\alpha(\varphi_1)}_0^2 &\leq
		\LipPhi{h_\alpha}\intl_0^1\left(\Phi(\varphi_2)-\Phi(\varphi_1)\right)(\varphi_2-\varphi_1)\,dx \\
	& \lesssim\norm{\varphi_2-\varphi_1}_0^2.
\end{align*}
Thus the previous computation can be continued by asserting that there exists $C>0$ such that
\begin{equation}
	\frac{D-\epsilon CC_P^2}{1+C_P^2}\norm{c_2-c_1}_1^2+\eta\left(c_2(0)-c_1(0)\right)^2\leq
		\frac{C}{\epsilon}\norm{\varphi_2-\varphi_1}_0^2,
	\label{eq:stat-contdep-c}
\end{equation}
where we also applied Poincar\'e's inequality to $c_2-c_1\in V_f$. Choosing $\epsilon>0$ so small that $D-\epsilon CC_P^2>0$, we get from Eq.~\ref{eq:stat-contdep-c} the desired continuity estimate. In particular, for $\varphi_1=\varphi_2$ we obtain the uniqueness of the solution to Problem~\ref{eq:stat-c}.

The foregoing results enable us to define the operator $\S_1:\VV\to\UU$ such that $\S_1(\varphi)=c$. From the continuity estimate \ref{eq:stat-contdep-c} we deduce that $\S_1$ is Lipschitz continuous on $\VV$ and, with a little more work, that it is also compact. To see this, we observe first of all that the assumption $h_\alpha(0)=0$ implies $\S_1(0)=c_b$ (i.e., the unique solution to Problem~\ref{eq:stat-c} for $\varphi=0$ is $c=c_b$), then we choose $\varphi_1=0$ in Eq.~\ref{eq:stat-contdep-c} and drop the subindex $2$ to obtain
\begin{equation*}
	\norm{\S_1(\varphi)-c_b}_1^2\lesssim\norm{\varphi}_0^2.
\end{equation*}
We take now $\{\varphi_k\}_{k=1}^{\infty}\subseteq\VV$ and notice that, in view of the latter estimate, the sequence $\{\S_1(\varphi_k)-c_b\}_{k=1}^{\infty}$ is bounded in $H^1(I)$. Owing to Rellich's Theorem, we can therefore assume, upon passing to a subsequence, that $\S_1(\varphi_k)-c_b$ converges in $L^2(I)$ as $k\to\infty$, i.e., that the sequence $\{\S_1(\varphi_k)\}_{k=1}^{\infty}\subseteq\UU$ is convergent, which proves the compactness of $\S_1$.

\bigskip

We turn now our attention to the problem
\begin{equation}
	\left\{
	\begin{array}{rcll}
		-\kappa_m\Phi(\phi)_{xx} & = & \Gamma(x,\,\phi,\,\theta) & \text{in\ } (0,\,S),\,(S,\,1) \\
		\kappa_m\Phi(\phi)_x(0) & = & 0 \\
		\phi(1) & = & \phith \\
		\kappa_m\jump{\Phi(\phi)_x} & = & 0,
	\end{array}
	\right.
	\label{eq:stat-phi}
\end{equation}
where $\theta\in\UU$ is given. Again, we associate with it an auxiliary problem in which the functions $f_\alpha^p$, $f_\alpha^d$ are replaced by $\tilde{f}_\alpha^p=f_\alpha^p\1_{[0,\,\phimax]}$, $\tilde{f}_\alpha^d=f_\alpha^d\1_{[0,\,+\infty)}$, respectively. The weak formulation is recovered from Eq.~\ref{eq:stat-1d-weak} by letting $v_2=0$ and substituting conveniently the functions at the right-hand side: find $\phi\in H^1(I)$, with $\Phi(\phi)\in H^1(I)$ and $\phi(1)=\phith$, such that
\begin{equation*}
	\kappa_m\intl_0^1\Phi(\phi)_x v_x\,dx=
		\sum_{\alpha=T,\,H}\intl_{\Omega_\alpha}\left(\sum_{\nu=p,\,d}\gamma_\alpha^\nu\tilde{f}_\alpha^\nu(\phi)
			g_\alpha^\nu(\theta)-\delta\phi\right)v\,dx
\end{equation*}
for all $v\in V_f$. We set $u:=\Phi(\phi)$, whence $\phi=\Phi^{-1}(u)$, so that this equation becomes
\begin{equation}
	\kappa_m\intl_0^1 u_x v_x\,dx=\!\!\!
		\sum_{\alpha=T,\,H}\intl_{\Omega_\alpha}\!\!\left(\sum_{\nu=p,\,d}\gamma_\alpha^\nu(\tilde{f}_\alpha^\nu\circ\Phi^{-1})(u)
			g_\alpha^\nu(\theta)-\delta\Phi^{-1}(u)\right)\!\! v\,dx
	\label{eq:stat-phi-weak}
\end{equation}
for all $v\in V_f$. If we introduce the antiderivatives $\tilde{F}_\alpha^\nu$, $\Psi$ of $\tilde{f}_\alpha^\nu\circ\Phi^{-1}$, $\Phi^{-1}$ vanishing in zero, we can regard Eq.~\ref{eq:stat-phi-weak} as the Euler-Lagrange equation for the functional
\begin{equation*}
	J_2(u)=\frac{\kappa_m}{2}\intl_0^1 u_x^2\,dx-
		\sum_{\alpha=T,\,H}\intl_{\Omega_\alpha}\left(\sum_{\nu=p,\,d}\gamma_\alpha^\nu\tilde{F}_\alpha^\nu(u)
			g_\alpha^\nu(\theta)-\delta\Psi(u)\right)\,dx
\end{equation*}
over the class of admissible functions $\A_2=\{u\in H^1(I):u(1)=\Phi(\phith)\}$. Thus, again we can look for our solution $u$ as a minimizing point of $J_2$ on $\A_2$.

Notice that $\tilde{F}_\alpha^p(s)\leq\tilde{F}_\alpha^p(\Phi(\phimax))$, and that $\tilde{F}_\alpha^d(s),\,\Psi(s)\geq 0$ for all $s\in\R$. Therefore, recalling further that $\gamma_\alpha^d<0$ (hypothesis (H2)), we have
\begin{equation*}
	J_2(u)\geq\frac{\kappa_m}{2}\norm{u_x}_0^2
		-\sum_{\alpha=T,\,H}\gamma_\alpha^p\norm{g_\alpha^p}_\infty\tilde{F}_\alpha^p(\Phi(\phimax))\abs{\Omega_\alpha}.
\end{equation*}
$J_2$ is thus coercive, hence any minimizing sequence $\{u_k\}_{k=1}^{\infty}\subseteq\A_2$ converges weakly (up to possibly passing to subsequences) to some $\bar{u}\in H^1(I)$. Since $u_k-\Phi(\phith)\in V_f$ and $V_f$ is weakly closed in $H^1(I)$, it results $\bar{u}-\Phi(\phith)\in V_f$, that is $\bar{u}\in\A_2$. In addition, $J_2$ is in turn of the form $\int_0^1\Lag_2(u_x,\,u,\,x)\,dx$ for the Lagrangian
\begin{equation*}
	\Lag_2(p,\,z,\,x)=\frac{\kappa_m}{2}p^2-\sum_{\alpha=T,\,H}\left(\sum_{\nu=p,\,d}\gamma_\alpha^\nu
		\tilde{F}_\alpha^\nu(z)g_\alpha^\nu(\theta(x))-\delta\Psi(z)\right)\1_{\Omega_\alpha}(x),
\end{equation*}
which is smooth and convex in $p$ for each $z\in\R$, $x\in I$, hence $J_2$ is sequentially weakly lower semicontinuous on $H^1(I)$. It follows that $\bar{u}$ is a minimizing point for $J_2$ on $\A_2$, and consequently $\bar{\phi}:=\Phi^{-1}(\bar{u})$ is a weak solution to our auxiliary problem.

Mimic now the computations of the proof of Theorem \ref{theo:boundedness-stat} with $v_2=0$ to obtain that, for any fixed $\theta\in\UU$, all solutions to the auxiliary problem range in $[0,\,\phimax]$, whence $0\leq\bar{\phi}\leq\phimax$ in $\bar{I}$. But the auxiliary problem and Problem~\ref{eq:stat-phi} coincide for $\phi\in[0,\,\phimax]$, hence ultimately we have found a solution $\bar{\phi}\in\VV$ to Problem~\ref{eq:stat-phi}.

Next we show that, by introducing suitable constraints on the parameters, we can guarantee that $\bar{\phi}$ be strictly positive in $\bar{I}$. Let us pick $v_1=\Phi(\phi)-\Phi(\phith)$, $v_2=0$ as test functions in Eq.~\ref{eq:stat-1d-weak} to discover
\begin{align*}
	\kappa_m\intl_0^1\Phi(\phi)_x\left(\Phi(\phi)-\Phi(\phith)\right)_x\,dx &=
		\sum_{\substack{\alpha=T,\,H\\ \nu=p,\,d}}\intl_{\Omega_\alpha}\gamma_\alpha^\nu f_\alpha^\nu(\phi)
			g_\alpha^\nu(\theta)\left(\Phi(\phi)-\Phi(\phith)\right)\,dx \\
	&\phantom{=}-\delta\intl_0^1\phi\left(\Phi(\phi)-\Phi(\phith)\right)\,dx.
\end{align*}
Noting that $\Phi(\phi)_x=\left(\Phi(\phi)-\Phi(\phith)\right)_x$ at the left-hand side and using the boundedness of $f_\alpha^\nu$, $g_\alpha^\nu$, $\phi$ at the right-hand side, we estimate
\begin{equation*}
	\norm{(\Phi(\phi)-\Phi(\phith))_x}_0^2\lesssim\norm{\Phi(\phi)-\Phi(\phith)}_\infty.
\end{equation*}
In addition, owing to Poincar\'e's and Morrey's inequalities,
\begin{equation*}
	\norm{(\Phi(\phi)-\Phi(\phith))_x}_0^2\gtrsim\norm{\Phi(\phi)-\Phi(\phith)}_1^2\gtrsim\norm{\Phi(\phi)-\Phi(\phith)}_\infty^2,
\end{equation*}
hence finally there exists $C_1>0$ such that $\norm{\Phi(\phi)-\Phi(\phith)}_\infty\leq C_1$, which indicates that $\Phi(\phi(x))\geq\Phi(\phith)-C_1$ for all $x\in\bar{I}$. For definiteness, we report the explicit expression of a possible constant $C_1$:
\begin{equation*}
	C_1=\frac{1+C_P^2}{\kappa_m}\left(\sum_{\substack{\alpha=T,\,H \\ \nu=p,\,d}}\gamma_\alpha^\nu\norm{f_\alpha^\nu}_\infty
		\norm{g_\alpha^\nu}_\infty\abs{\Omega_\alpha}+\delta\phimax\right).
\end{equation*}
We fix now $\epsilon\in(0,\,\phith)$ and observe that $\phi\geq\epsilon$ if and only if $\Phi(\phi)\geq\Phi(\epsilon)$, thus we can guarantee that $\phi$ be strictly positive in $\bar{I}$ if we require $\Phi(\phith)-C_1\geq\Phi(\epsilon)$, which implies the constraint
\begin{equation}
	C_1\leq\Phi(\phith)-\Phi(\epsilon).
	\label{eq:constr1}
\end{equation}
Given this, any solution $\phi\in\VV$ to Problem~\ref{eq:stat-phi} satisfies $0<\epsilon\leq\phi\leq\phimax$ in $\bar{I}$.

Finally we assert that, under condition \ref{eq:constr1}, solutions to Problem~\ref{eq:stat-phi} depend continuously on $\theta\in\UU$ in the norm $\norm{\cdot}_0$. For this, let $\phi_1,\,\phi_2$ be two solutions corresponding to $\theta_1,\,\theta_2\in\UU$, respectively, then for all $v\in V_f$ their difference $\phi_2-\phi_1$ solves
\begin{multline*}
	\kappa_m\intl_0^1(\Phi(\phi_2)-\Phi(\phi_1))_x v_x\,dx+\delta\intl_0^1(\phi_2-\phi_1)v\,dx \\
	=\sum_{\substack{\alpha=T,\,H \\ \nu=p,\,d}}
		\intl_{\Omega_\alpha}\gamma_\alpha^\nu\{(f_\alpha^\nu(\phi_2)-f_\alpha^\nu(\phi_1))g_\alpha^\nu(\theta_2)
			+f_\alpha^\nu(\phi_1)(g_\alpha^\nu(\theta_2)-g_\alpha^\nu(\theta_1))\}v\,dx.
\end{multline*}
We choose $v=\P(\phi_2-\phi_1)$ (cf. Appendix \ref{app:poisson}) and, mimicking the computations of Theorem \ref{theo:uniqueness-stat}, we find that there exists $C_2>0$ such that
\begin{equation*}
	(\kappa_m-C_2)\intl_0^1\left(\Phi(\phi_2)-\Phi(\phi_1)\right)(\phi_2-\phi_1)\,dx+(\delta-C_2)\wnorm{\phi_2-\phi_1}_0^2
		\lesssim\norm{\theta_2-\theta_1}_0^2
\end{equation*}
(for the sake of completeness, we point out that the constant $C_2$ is the same as the one appearing in Theorem \ref{theo:uniqueness-stat}, cf. also Remark \ref{rem:C-uniqueness-stat}). Assume
\begin{equation*}
	C_2\leq\min\{\kappa_m,\,\delta\},
\end{equation*}
then, since $\abs{\Phi(\phi_2)-\Phi(\phi_1)}\geq\left(\min_{s\in[\epsilon,\,\phimax]}\Phi'(s)\right)\abs{\phi_2-\phi_1}$, it follows
\begin{equation}
	\min_{s\in[\epsilon,\,\phimax]}\Phi'(s)(\kappa_m-C_2)\norm{\phi_2-\phi_1}_0^2+
		(\delta-C_2)\wnorm{\phi_2-\phi_1}_0^2\lesssim\norm{\theta_2-\theta_1}_0^2,
	\label{eq:cont_est-phi}
\end{equation}
which yields the desired continuity estimate, together with uniqueness of the solution to Problem~\ref{eq:stat-phi} when $\theta_1=\theta_2$.

Define now $C:=\max\{C_1,\,C_2\}$ and impose
\begin{equation*}
	C<\min\{\Phi(\phith)-\Phi(\epsilon),\,\kappa_m,\,\delta\},
\end{equation*}
then Problem~\ref{eq:stat-phi} admits a unique solution $\phi\in\VV$ for any given $\theta\in\UU$. Consequently, we are in a position to define the operator $\S_2:\UU\to\VV$ such that $\S_2(\theta)=\phi$, which is Lipschitz continuous on $\UU$ in view of Eq.~\ref{eq:cont_est-phi}.

\bigskip

At last, we come back to the full Problem~\ref{eq:stat-1d} in this way: we construct by composition the operator $\S:=\S_2\circ\S_1:\VV\to\VV$ such that $\S(\varphi)=\phi$. Since $\S_1$ is continuous and compact and $\S_2$ is continuous, $\S$ is in turn continuous and compact; moreover, $\VV$ is convex and closed in $L^2(I)$ (to see the latter property, use that convergence in $L^2(I)$ implies pointwise convergence a.e. in $I$ upon passing to subsequences). Schauder's Fixed Point Theorem implies then that $\S$ has a fixed point $\phi\in\VV$, hence the pair $(\phi,\,c=\S_1(\phi))$ is a weak solution to Problem~\ref{eq:stat-1d} and we are done.
\end{proof}

\begin{remark}
If, in addition to the hypotheses of Theorem \ref{theo:existence-stat}, also the hypotheses of Theorem \ref{theo:uniqueness-stat} hold true then the solution to Problem~\ref{eq:stat-1d} is unique.
\end{remark}

\section{Possible developments}
\label{sect:developments}
In this paper we have addressed the mathematical formulation of initial and boundary-value problems for multiphase models of tumor growth, deduced from the framework developed in \cite{MR2471305}. We have performed a qualitative analysis of both the time-dependent and the time-independent problems, mainly by means of $L^2$-$H^1$ \emph{a priori} estimates, establishing nonnegativity, boundedness, uniqueness, and continuous dependence of the solution on the initial data. In the one-dimensional time-independent case we have also obtained the existence of the solution.

The analytical techniques used here may be profitably exploited to approach more advanced multiphase models, also fitting the framework presented in \cite{MR2471305}, which incorporate a more accurate description of the interactions between the cells and the extracellular matrix. Based on phenomenological laboratory observations \cite{baumgartner2000cip,canetta2005mcv,sun2005mmt}, they take into account the adhesion of the former to the latter by relating the cell-matrix stress $\m_{\alpha m}$ (a component of the overall external stress $\m_\alpha$ included in Eq.~\ref{eq:mixture-stress}) to the cell-matrix relative velocity $\v_m-\v_\alpha$ as
\begin{equation}
	\v_m-\v_\alpha=\K_{\alpha m}\left(1-\frac{\Sigma_{\alpha m}}{\vert\m_{\alpha m}\vert}\right)^+\m_{\alpha m}.
	\label{eq:mam}
\end{equation}
This formula says that if the magnitude of the stress $\m_{\alpha m}$ is below some critical threshold $\Sigma_{\alpha m}>0$ then there is no relative motion between the cells and the matrix, that is the former remain anchored to the latter. Conversely, if $\vert\m_{\alpha m}\vert$ is above the threshold $\Sigma_{\alpha m}$ then the interaction stress $\m_{\alpha m}$ is proportional to the relative velocity $\v_m-\v_\alpha$, thus recovering a more classical viscous friction which, in particular, means that cells slide on the matrix with their own velocity. If Eq.~\ref{eq:mam} is used, with the additional assumption of motionless matrix ($\v_m=0$), then the equations ruling cell dynamics take the form
\begin{equation}
	\frac{\partial\phi_\alpha}{\partial t}-\nabla\cdot[\phi_\alpha
		\I_\alpha(\phi_T,\,\phi_H,\,\vert\nabla{\phi}\vert)\K_{\alpha m}\nabla{(\phi\Sigma(\phi)})]
			=\Gamma_\alpha,
	\label{eq:phi-attach_detach}
\end{equation}
where
\begin{equation*}
	\I_\alpha(\phi_T,\,\phi_H,\,\vert\nabla{\phi}\vert)=
		{\left(\frac{\phi_\alpha}{\phi}-\frac{\Sigma_{\alpha m}}{\vert\nabla(\phi\Sigma(\phi))\vert}\right)}^+
\end{equation*}
translates the adhesion mechanisms discussed above. In particular, the velocity $\v_\alpha$ of the cells is 
\begin{equation*}
	\v_\alpha=-\I_\alpha\K_{\alpha m}\nabla(\phi\Sigma(\phi)),
\end{equation*}
hence if $\vert\nabla(\phi\Sigma(\phi))\vert<\Sigma_{\alpha m}$ then $\v_\alpha=0$ because $\I_\alpha=0$ (recall that $\phi_\alpha\leq\phi$ by definition) and the cells stay attached to the matrix, while if $\vert\nabla(\phi\Sigma(\phi))\vert>\Sigma_{\alpha m}$ the cells might detach from the matrix since one may have $\I_\alpha>0$.

Equation~\ref{eq:phi-attach_detach} can be regarded as a refined version of Eq.~\ref{eq:mixture-general}, which would be interesting to study in view of its physical significance, possibly adapting the techniques illustrated in this paper. Notice indeed that setting $\Sigma_{\alpha m}=0$ for both $\alpha=T$ and $\alpha=H$, which amounts to assuming a purely viscous friction between the cells and the matrix without attachment/detachment, reduces Eq.~\ref{eq:phi-attach_detach} to Eq.~\ref{eq:mixture-general}, hence the latter turns out to be a particular case of the former. Additional mathematical difficulties need however to be overcome, especially the harder degeneracy of the differential operator in space caused by the term $\I_\alpha$.

\section*{Acknowledgements}
The author wants to address special thanks to prof. Luigi Preziosi, who inspired this research line and kindly proofread the final version of the manuscript.

\appendix

\section{The Poisson solution operator}
\label{app:poisson}
In this appendix we introduce a basic tool, inspired by \cite{fadimba1995ape,laurencot2005cmt}, useful to handle the nonlinearity $\Phi$ of Eq.~\ref{eq:nonlin.diff}.

Consider the linear elliptic problem
\begin{equation}
	\left\{
	\begin{array}{rcll}
		-\Delta{u} & = & f & \text{in\ } \Omega \\[0.1cm]
		\nabla{u}\cdot\n & = & 0 & \text{on\ } \partial_b\Omega \\[0.1cm]
		u & = & 0 & \text{on\ } \partial_f\Omega
	\end{array}
	\right.
	\label{eq:Poisson}
\end{equation}
and assume $\partial_f\Omega\ne\emptyset$ (see Appendix~\ref{app:no.far} for the case $\partial_f\Omega=\emptyset$), $f\in L^2(\Omega)$. The classical weak formulation is: 
\begin{equation}
	\left\{
	\begin{array}{c}
		\text{find $u\in V_f$ such that} \\[0.2cm]
		\displaystyle{\intl_{\Omega}}\nabla{u}\cdot\nabla{v}\,dx=\displaystyle{\intl_{\Omega}}fv\,dx, \quad \forall\,v\in V_f,
	\end{array}
	\right.
	\label{eq:Poisson-weak}
\end{equation}
which, owing to Lax-Milgram theory, yields a unique weak solution $u$ with $\norm{\nabla{u}}_0\lesssim\norm{f}_0$.

Let us introduce the linear operator $\P:L^2(\Omega)\to L^2(\Omega)$, termed the Poisson solution operator, which associates with $f$ the solution $u=\P{f}$ to Problem~\ref{eq:Poisson-weak} above. $\P$ is bounded, symmetric, and positive definite. Therefore, besides the standard inner product $\innprod{f}{g}=\int_\Omega f(x)g(x)\,dx$, we can endow $L^2(\Omega)$ with an inner product induced by $\P$:
$\winnprod{f}{g}:=\innprod{\P{f}}{g}$, with corresponding norm $\wnorm{f}_0^2:=\winnprod{f}{f}=\innprod{\P{f}}{f}$. Using $\P$ we rewrite Eq.~\ref{eq:Poisson-weak} as
\begin{equation}
	\intl_{\Omega}\nabla{\P{f}}\cdot\nabla{v}\,dx = \intl_{\Omega}fv\,dx, \quad \forall\,f\in L^2(\Omega),\ v\in V_f,
	\label{eq:action_of_P}
\end{equation}
whence, letting $v=\P{f}\in V_f$, we see that $\wnorm{f}_0=\norm{\nabla{\P{f}}}_0$. Thus the previous \emph{a priori} estimate on $u$ implies $\wnorm{f}_0\lesssim\norm{f}_0$. On the other hand, in view of Poincar\'e's inequality, we also have $\norm{\P{f}}_0\lesssim \norm{\nabla{\P{f}}}_0=\wnorm{f}_0$.

Given $u\in L^2(0,\,\Tmax;\,V_f)$ with $u_t\in L^2(0,\,\Tmax;\,V_f')$, the duality pairing between $u_t(t)\in V_f'$ and $\P{u(t)}\in V_f$ yields, for a.e. $t$,
\begin{equation*}
	\dual{u_t(t)}{\P{u(t)}}=\frac{1}{2}\frac{d}{dt}\wnorm{u(t)}_0^2.
\end{equation*}

\section{Problems with no far boundary}
\label{app:no.far}
The aim of this appendix is to extend the theory developed in Sect.~\ref{sect:time.dependent} to problems in which there is actually no far boundary within the host environment, that is the whole tissue is surrounded by blood vessels ($\partial\Omega\equiv\partial_b\Omega$, $\partial_f\Omega=\emptyset$). For instance, this situation arises when studying the confinement of a tumor mass within a healthy tissue well supplied with blood by a nearby vasculature. In this case we can imagine that $\overline{\Omega_T(t)}\subset\Omega$ at all times, with $\partial\Omega=\partial\Omega_H(t)\equiv\partial_b\Omega$ (Fig. \ref{fig:domain-nofar}). From the mathematical point of view, the main point is the disappearance of the Dirichlet boundary conditions, which forces one to modify the Poisson solution operator $\P$ introduced in Appendix~\ref{app:poisson} in order to deal with the new problem for $\phi$.

\begin{figure*}
\centering
\includegraphics[width=0.35\textwidth,clip]{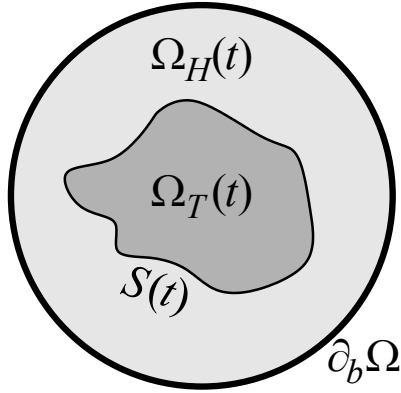}
\caption{The geometrical configuration of the domain $\Omega$ in the problem of a tumor mass surrounded by host tissue with no far boundary.}
\label{fig:domain-nofar}
\end{figure*}

Specifically, in place of Problem~\ref{eq:Poisson} we consider (cf. also \cite{fadimba1995ape})
\begin{equation*}
	\left\{
	\begin{array}{rcll}
		-\Delta{u} & = & f-\ave{f} & \text{in\ } \Omega \\[0.1cm]
		\nabla{u}\cdot\n & = & 0 & \text{on\ } \partial\Omega \\[0.1cm]
		\ave{u} & = & \ave{f} & \text{in\ } \Omega
	\end{array}
	\right.
\end{equation*}
for $f\in L^2(\Omega)$, where $\ave{\cdot}$ denotes average on $\Omega$:
\begin{equation*}
	\ave{f}:=\frac{1}{\vert\Omega\vert}\intl_\Omega f(x)\,dx.
\end{equation*}
The weak formulation is:
\begin{equation}
	\left\{
	\begin{array}{c}
		\text{find $u\in H^1(\Omega)$ such that} \\[0.2cm]
		\displaystyle{\intl_\Omega}\nabla{u}\cdot\nabla{v}\,dx+\vert\Omega\vert\ave{u}\ave{v}=
			\displaystyle{\intl_\Omega}fv\,dx, \quad \forall\,v\in H^1(\Omega),
	\end{array}
	\right.
	\label{eq:Poisson-Neumann-weak}
\end{equation}
whence, owing to Lax-Milgram theory, we get a unique weak solution fulfilling $\norm{u}_1\lesssim\norm{f}_0$. Setting $u=\P{f}$, we observe from Eq.~\ref{eq:Poisson-Neumann-weak} that the operator $\P$ satisfies now
\begin{equation*}
	\intl_\Omega\nabla{\P{f}}\cdot\nabla{v}\,dx+\vert\Omega\vert\ave{\P{f}}\ave{v}=\intl_\Omega fv\,dx, \quad
		\forall\,f\in L^2(\Omega),\,v\in H^1(\Omega),
\end{equation*}
therefore the inner product $\winnprod{\cdot}{\cdot}$ and the induced norm $\wnorm{\cdot}_0$ take the following forms:
\begin{align*}
	\winnprod{f}{g} &= \intl_\Omega\nabla{\P{f}}\cdot\nabla{\P{g}}\,dx+\vert\Omega\vert\ave{\P{f}}\ave{\P{g}} \\
	\wnorm{f}_0^2 &= \norm{\nabla{\P{f}}}_0^2+\vert\Omega\vert\ave{\P{f}}^2\asymp\norm{\P{f}}_1^2.
\end{align*}
This essentially affects the estimates of Theorem \ref{theo:uniqueness} when dealing with
\begin{multline*}
	\intl_\Omega\nabla{(\Phi(\phi_2)-\Phi(\phi_1))}\cdot\nabla{\P(\phi_2-\phi_1)}\,dx \\
	=\intl_\Omega(\Phi(\phi_2)-\Phi(\phi_1))(\phi_2-\phi_1)\,dx-
		\vert\Omega\vert\ave{\Phi(\phi_2)-\Phi(\phi_1)}\ave{\P(\phi_2-\phi_1)},
\end{multline*}
because of the second term at the right-hand side which must be estimated. First we use Cauchy's inequality to find
\begin{equation*}
	\ave{\Phi(\phi_2)-\Phi(\phi_1)}\ave{\P(\phi_2-\phi_1)}\leq
		\frac{\epsilon}{2}\ave{\Phi(\phi_2)-\Phi(\phi_1)}^2+\frac{1}{2\epsilon}\ave{\P(\phi_2-\phi_1)}^2.
\end{equation*}
Next we employ Cauchy-Schwartz's inequality and Lipschitz continuity of $\Phi$ to discover
\begin{align*}
	\ave{\Phi(\phi_2)-\Phi(\phi_1)}^2 &\leq
		\frac{1}{\vert\Omega\vert}\intl_\Omega\vert\Phi(\phi_2)-\Phi(\phi_1)\vert^2\,dx \\
	&\leq\frac{\Lip{\Phi}}{\vert\Omega\vert}
		\intl_\Omega(\Phi(\phi_2)-\Phi(\phi_1))(\phi_2-\phi_1)\,dx, \\
	\ave{\P(\phi_2-\phi_1)(t)}^2 &\leq \frac{1}{\vert\Omega\vert}\norm{\P(\phi_2-\phi_1)(t)}_0^2\leq
		C\wnorm{(\phi_2-\phi_1)(t)}_0^2,
\end{align*}
whence
\begin{multline*}
	\intl_\Omega\nabla{(\Phi(\phi_2)-\Phi(\phi_1))}\cdot\nabla{\P(\phi_2-\phi_1)}\,dx \\
		\geq\left(1-\frac{\epsilon\Lip{\Phi}}{2\vert\Omega\vert}\right)
			\intl_\Omega(\Phi(\phi_2)-\Phi(\phi_1))(\phi_2-\phi_1)\,dx-\frac{C}{2\epsilon}\wnorm{(\phi_2-\phi_1)(t)}_0^2.
\end{multline*}
Choosing $\epsilon>0$ so small that $\epsilon\Lip{\Phi}<2\abs{\Omega}$, the terms at the right-hand side can be finally incorporated into the similar ones already present in the proof of Theorem \ref{theo:uniqueness}.

\bibliographystyle{plain}
\bibliography{Ta-IBVtumor.bib}

\end{document}